  \providecommand\BibTeX{{%
    \normalfont B\kern-0.5em{\scshape i\kern-0.25em b}\kern-0.8em\TeX}}}
\def\@ACM@checkaffil{
    \if@ACM@instpresent\else
    \ClassWarningNoLine{\@classname}{No institution present for an affiliation}%
    \fi
    \if@ACM@citypresent\else
    \ClassWarningNoLine{\@classname}{No city present for an affiliation}%
    \fi
    \if@ACM@countrypresent\else
        \ClassWarningNoLine{\@classname}{No country present for an affiliation}%
    \fi
}
\newtheorem{thm}{Theorem}
\newtheorem{lemma}{Lemma}
\theoremstyle{definition}
\newtheorem{definition}{Definition}
\begin{document}

\title{Collaboration-Aware Graph Convolutional Network for Recommender Systems}

\author{Yu Wang}
\email{yu.wang.1@vanderbilt.edu}
\affiliation{%
  \institution{Vanderbilt University}
}

\author{Yuying Zhao}
\email{yuying.zhao@vanderbilt.edu}
\affiliation{%
  \institution{Vanderbilt University}
}

\author{Yi Zhang}
\email{yi.zhang@vanderbilt.edu}
\affiliation{%
  \institution{Vanderbilt University}
}

\author{Tyler Derr}
\email{derr.tyler@vanderbilt.edu}
\affiliation{%
  \institution{Vanderbilt University}
}

\renewcommand{\shortauthors}{Wang, et al.}

\begin{abstract}
Graph Neural Networks (GNNs) have been successfully adopted in recommender systems by virtue of the message-passing that implicitly captures collaborative effect. Nevertheless, most of the existing message-passing mechanisms for recommendation are directly inherited from GNNs without scrutinizing whether the captured collaborative effect would benefit the prediction of user preferences. In this paper, we first analyze how message-passing captures the collaborative effect and propose a recommendation-oriented topological metric, Common Interacted Ratio (CIR), which measures the level of interaction between a specific neighbor of a node with the rest of its neighbors. After demonstrating the benefits of leveraging collaborations from neighbors with higher CIR, we propose a recommendation-tailored GNN, Collaboration-Aware Graph Convolutional Network (CAGCN), that goes beyond 1-Weisfeiler-Lehman(1-WL) test in distinguishing non-bipartite-subgraph-isomorphic graphs. Experiments on six benchmark datasets show that the best CAGCN variant outperforms the most representative GNN-based recommendation model, LightGCN, by nearly 10\% in Recall@20 and also achieves around 80\% speedup. Our code/supplementary is at \href{https://github.com/YuWVandy/CAGCN} {\textcolor{blue}{https://github.com/YuWVandy/CAGCN.}} 
\vspace{-1.5ex}
\end{abstract}

\begin{CCSXML}
<ccs2012>
   <concept>
       <concept_id>10010147.10010257</concept_id>
       <concept_desc>Computing methodologies~Machine learning</concept_desc>
       <concept_significance>500</concept_significance>
       </concept>
 </ccs2012>
\end{CCSXML}

\ccsdesc[500]{Computing methodologies~Machine learning\vspace{-1.5ex}}
\keywords{\vspace{-3.5ex}\\Recommender systems, graph neural networks, collaborative effect\vspace{-0.5ex}}

\maketitle
\vspace{-1ex}
\section{Introduction}\label{sec-introduction}
Recommender systems aim to alleviate information overload by helping users discover items of interest~\cite{covington2016deep,ying2018graph} and have been widely deployed in real-world applications~\cite{shalaby2022m2trec}. Given historical user-item interactions (e.g., click, purchase, review, and rate), the key is to leverage the collaborative effect~\cite{ebesu2018collaborative, he2017neural, ngcf} to predict how likely users will interact with items. A common paradigm for modeling collaborative effect is to first learn embeddings of users/items capable of recovering historical user-item interactions and then perform top-K recommendation based on the pairwise similarity between the learned user/item embeddings.

Since historical user-item interactions can be naturally represented as a bipartite graph with users/items being nodes and interactions being edges~\cite{li2013recommendation, ngcf, lightgcn} and given the unprecedented success of GNNs in learning node representations~\cite{gcn, hu2022detecting, zhuang2022local, mohamed2020social}, recent research has started to leverage GNNs to learn user/item embeddings for the recommendation. Two pioneering works NGCF~\cite{ngcf} and LightGCN~\cite{lightgcn} leverage graph convolutions to aggregate messages from local neighborhoods, which directly injects the collaborative signal into user/item embeddings. More recently, \cite{chen2021structured, wu2021self} explore the robustness and self-supervised learning~\cite{wang2022gnnssl} of graph convolution for recommendation. However, the message-passing mechanisms in all previous recommendation models are directly inherited from GNNs without carefully justifying how collaborative signals are captured and whether the captured collaborative signals would benefit the prediction of user preference. Such ambiguous understanding on how the message-passing captures collaborative signals would pose the risk of learning uninformative or even harmful user/item representations when adopting GNNs in recommendation. For example, \cite{GTN} shows that a large portion of user interactions cannot reflect their actual purchasing behaviors. In this case, blindly passing messages following existing styles of GNNs could capture harmful collaborative signals from these unreliable interactions, which hinders the performance of GNN-based recommender systems.

To avoid collecting noisy or even harmful collaborative signals in message-passing of traditional GNNs, existing work GTN~\cite{GTN} proposes to adaptively propagate user/item embeddings by adjusting the weight of edges based on items' similarity to users' main preferences (i.e., the trend). However, such similarity is computed based on the learned embeddings that still implicitly encode noisy collaborative signals from unreliable user-item interactions. Worse still, calculating edge weights based on user/item embeddings along the training on the fly is computationally prohibitive and hence prevents the model from being deployed in industrial-level recommendations. SGCN~\cite{chen2021structured} attaches the message-passing with a trainable stochastic binary mask to prune noisy edges. However, the unbiased gradient estimator increases the computational load.

Despite the fundamental importance of capturing beneficial collaborative signals, the related studies are still in their infancy. To fill this crucial gap, we aim to demystify the collaborative effect captured by message-passing and develop new insights towards customizing message-passing for recommendations. Furthermore, these insights motivate us to design a recommendation-tailored GNN, Collaboration-Aware Graph Convolutional Network(CAGCN), that passes neighborhood information based on their Common Interacted Ratio (CIR) via the Collaboration-Aware Graph Convolution (CAGC). Our major contributions are listed as follows:

\begin{itemize}[leftmargin=*]
    \item \textbf{Novel Perspective on Collaborative Effect:} We demystify the collaborative effect by analyzing how message-passing helps capture collaborative signals and when the captured collaborative signals are beneficial in computing users' ranking over items.
    
    \item \textbf{Novel Recommendation-tailored Topological Metric:} We then propose a recommendation-tailored topological metric, Common Interacted Ratio (CIR), and demonstrate the capability of CIR to quantify the benefits of the messages from neighborhoods.

    \item \textbf{Novel Convolution beyond 1-WL for Recommendation:} We integrate CIR into message-passing and propose a novel Collaboration-Aware Graph Convolutional Network (CAGCN). Then we prove that it can go beyond 1-WL test in distinguishing non-bipartite-subgraph-isomorphic graphs, show its superiority on real-world datasets including two newly collected ones, and provide an in-depth interpretation of its advantages.
\end{itemize}

\noindent Next, we comprehensively analyze the collaborative effect captured by message-passing and propose CIR to measure whether the captured collaborative effect benefits the prediction of user preferences. 

\vspace{-4ex}
\section{Analysis on Collaborative Effect}\label{sec-analysis}
Let $\mathcal{G} = (\mathcal{V}, \mathcal{E})$ be the user-item bipartite graph, where the node set $\mathcal{V} = \mathcal{U}\cup\mathcal{I}$ includes the user set $\mathcal{U}$ and the item set $\mathcal{I}$. Following previous work~\cite{ngcf, lightgcn, ultragcn}, we only consider the implicit user-item interactions and denote them as edges $\mathcal{E}$ where $e_{pq}$ represents the edge between node $p$ and $q$. The network topology is described by its adjacency matrix $\mathbf{A}\in \{0, 1\}^{|\mathcal{V}|\times |\mathcal{V}|}$, where $\mathbf{A}_{pq} = 1$ when $e_{pq} \in \mathcal{E}$, and $\mathbf{A}_{pq} = 0$ otherwise. Let $\mathcal{N}_p^l$ denote the set of observed neighbors that are exactly $l$-hops away from $p$ and $\mathcal{S}_p = (\mathcal{V}_{\mathcal{S}_p}, \mathcal{E}_{\mathcal{S}_p})$ be the neighborhood subgraph~\cite{beyond} induced in $\mathcal{G}$ by $\widetilde{\mathcal{N}}_p^1 = \mathcal{N}_p^1\cup\{p\}$. We use $\mathscr{P}_{pq}^{l}$ to denote the set of shortest paths of length $l$ between node $p$ and $q$ and denote one of such paths as $P_{pq}^l$. Note that $\mathscr{P}_{pq}^l=\emptyset$ if it is impossible to have a path between $p$ and $q$ of length $l$, e.g., $\mathscr{P}_{11}^1 =\emptyset$ in an acyclic graph. Furthermore, we denote the initial embeddings of users/items as $\mathbf{E}^0\in\mathbb{R}^{(n + m)\times d^0}$ where $\mathbf{e}_p^0 = \mathbf{E}^0_{p}$ and $d_p$ are the node $p$'s embedding and degree.

Following~\cite{lightgcn, ngcf}, each node has no semantic features but purely learnable embeddings. Therefore, we remove the nonlinear transformation by leveraging LightGCN~\cite{lightgcn} as the canonical architecture and exclusively explore the collaborative effect captured by message-passing. LightGCN passes messages from user $u$/item $i$'s neighbors within $L$-hops to $u/i$:
 \begin{equation}\label{eq-mp2}
 \small
    \mathbf{e}_{u}^{l + 1} = d_{u}^{-0.5}\sum_{j\in \mathcal{N}_u^1} d_{j}^{-0.5}\mathbf{e}_j^l, 
    \mathbf{e}_{i}^{l + 1} = d_{i}^{-0.5}\sum_{v\in \mathcal{N}_i^1} d_v^{-0.5}\mathbf{e}_v ^l, 
\end{equation}
$\forall l\in\{0, ..., L\}$. The propagated embeddings at all layers including the original embedding are aggregated together via mean-pooling:
 \begin{equation}\label{eq-mp}
 \small
    \mathbf{e}_{u} = \frac{1}{(L + 1)}\sum_{l = 0}^{L}\mathbf{e}_u^l, ~~~~~\mathbf{e}_{i} = \frac{1}{(L + 1)}\sum_{l = 0}^{L}\mathbf{e}_i^l, \forall u\in\mathcal{U}, \forall i\in\mathcal{I}
\end{equation}
In the training stage, for each observed user-item interaction $(u,i)$, LightGCN randomly samples a negative item $i^{-}$ that $u$ has never interacted with before, and forms the triple $(u, i, i^{-})$, which collectively forms the set of observed training triples $\mathcal{O}$. After that, the ranking scores of the user over these two items are computed as $y_{ui} = \mathbf{e}_{u}^{\top}\mathbf{e}_{i}$ and $y_{ui^{-}} = \mathbf{e}_{u}^{\top}\mathbf{e}_{i^{-}}$, which are finally used in optimizing the pairwise Bayesian Personalized Ranking (BPR) loss~\cite{bpr}:
\begin{equation}\label{eq-BPR}
\small
 \mathcal{L}_{\text{BPR}} = \sum_{(u, i, i^{-})\in\mathcal{O}}-\ln\sigma(y_{ui} - y_{ui^-}),
\end{equation}
where $\sigma(\cdot)$ is the Sigmoid function, and we omit the $L_2$ regularization here since it is mainly for alleviating overfitting and has no influence on the collaborative effect captured by message passing.

Under the above LightGCN framework, we expect to answer the following two questions:
\begin{itemize}[leftmargin=*]
    \item \boldsymbol{$Q_1$}: How does message-passing capture the collaborative effect and leverage it in computing users' ranking?
    \item \boldsymbol{$Q_2$}: When do collaborations captured by message-passing benefit the computation of users' ranking over items?
\end{itemize}
Next, We address \boldsymbol{$Q_1$} by theoretically deriving users' ranking over items under the message-passing framework of LightGCN and address \boldsymbol{$Q_2$} by proposing the Common Interacted Ratio (CIR) to measure the benefits of leveraging collaborations from each neighbor in computing users' ranking. The answers to the above two questions further motivate our design of Collaboration-Aware Graph Convolutional Network in Section~\ref{sec-framework}.

\begin{figure*}[t!]
    \centering
    \includegraphics[width=1\textwidth]{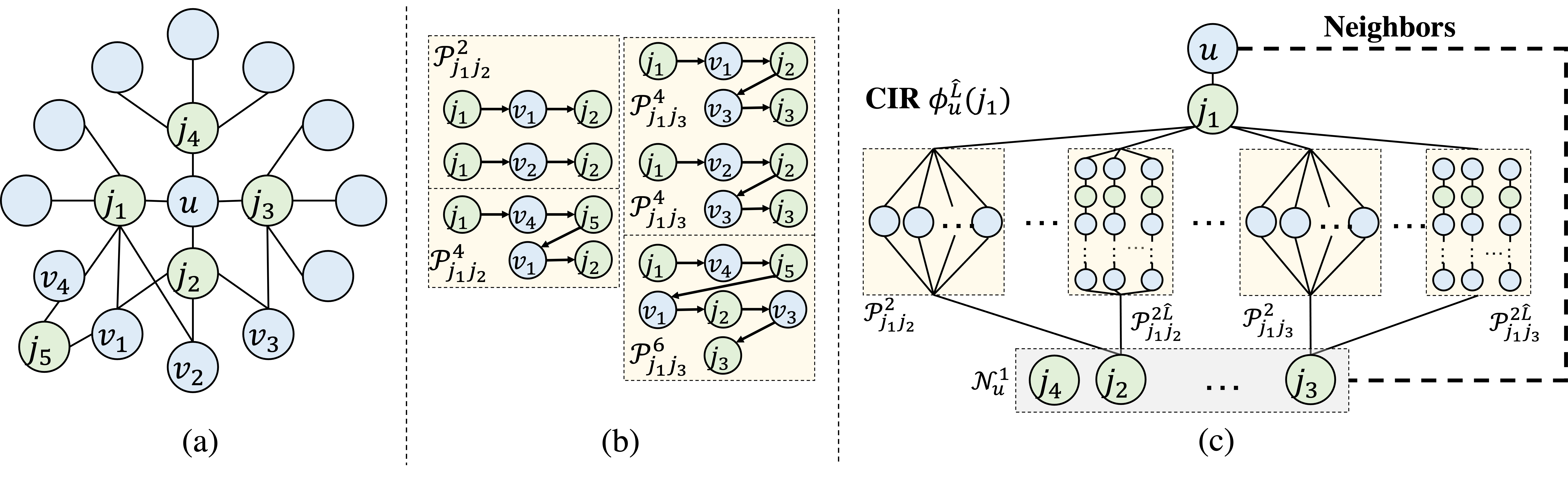}
    \vspace{-6ex}
    \caption{In (a)-(b), $j_1$ has more interactions (paths) with (to) $u$'s neighborhood than $j_4$ and hence is more representative of $u$'s purchasing behaviors than $j_4$. In (c), we quantify CIR between $j_1$ and $u$ via the paths (and associated nodes) between $j_1$ and $\mathcal{N}_u^1$.}
     \label{fig-analysis}
     \vskip -2ex
\end{figure*}

\vspace{-1ex}
\subsection{How does message-passing capture collaborative effect?}
The collaborative effect occurs when the prediction of a user's preference relies on other users' preferences or items' properties~\cite{ricci2011introduction}. Therefore, to answer \boldsymbol{$Q_1$}, we need to seek whether we leverage other nodes' embeddings in computing a specific user's ranking over items. In the inference stage of LightGCN, we take the inner product between user $u$'s embedding and item $i$'s embedding after $L$-layers' message-passing to compute the ranking as\footnote{Detailed derivation is attached in Appendix~\ref{sec-proof1}.}:
\vspace{-1ex}
\begin{equation}\label{eq-collaborative}
y_{ui}^L = (\sum_{l_1=0}^{L}\sum_{j\in\mathcal{N}^{l_1}_{u}}\sum_{l_2 = l_1}^{L}\beta_{l_2}\alpha_{ju}^{l_2}\mathbf{e}_{j}^0)^{\top}(\sum_{l_1=0}^{L}\sum_{v\in\mathcal{N}^{l_1}_{i}}\sum_{l_2 = l_1}^{L}\beta_{l_2}\alpha_{vi}^{l_2}\mathbf{e}_{v}^0),
\end{equation}
\vspace{-2ex}

\noindent where $\alpha_{ju}^{l_2} = \sum_{P_{ju}^{l_2}\in\mathscr{P}_{ju}^{l_2}}\prod_{e_{pq}\in P_{ju}^{l_2}}{d^{-0.5}_{p}d^{-0.5}_{q}}$($\alpha_{ju}^{l_2} = 0$ if $\mathscr{P}_{ju}^{l_2}=\emptyset$) denotes the total weight of all paths of length $l_2$ from $j$ to $u$, $\mathcal{N}^{0}_u = \{u\}$ and specifically, $\alpha_{uu}^{0} = 1$. $\beta_{l_2}$ is the weight measuring contributions of propagated embeddings at layer $l_2$. Thus, based on Eq.~\eqref{eq-collaborative}, we present the answer to \boldsymbol{$Q_1$} as  \boldsymbol{$A_1$}:
\textit{$L$-layer LightGCN-based message-passing captures collaborations between pairs of nodes $\{(j, v)|j\in \bigcup_{l=0}^L\mathcal{N}_u^{l}, v\in\bigcup_{l=0}^L\mathcal{N}_i^{l}\}$, and the collaborative strength of each pair is determined by 1) ${\mathbf{e}_j^0}^{\top}\mathbf{e}_v^0$: embedding similarity between $j$ and $v$, 2) $\{\alpha_{ju}^{l}\}_{l = 0}^L(\{\alpha_{vi}^{l}\}_{l = 0}^L)$: weight of all paths of length $l$ to $L$ from $j$ to $u$ ($v$ to $i$), and 3) $\{\beta_l\}_{l = 0}^L$: the weight of each layer.}

\subsection{When is the captured collaborative effect beneficial to users' ranking?}\label{sec-CIRbenefit}

Although users could leverage collaborations from other users/items as demonstrated above, we cannot guarantee all of these collaborations benefit the prediction of their preferences. For example, in Figure~\ref{fig-analysis}(a)-(b), $u$'s interacted item $j_1$ has more interactions (paths) to $u$'s neighborhoods than $j_4$ and hence is more representative of $u$'s purchasing behaviors~\cite{GTN, chen2021structured}. For each user $u$, we propose the Common Interacted Ratio to quantify the level of interaction between each specific neighbor of $u$ and $u$'s whole item neighborhood:
\begin{definition}
\vspace{-1ex}
\textbf{Common Interacted Ratio (CIR):} For any item $j\in\mathcal{N}_u^1$ of user $u$, the CIR of $j$ around $u$ considering nodes up to $(\widehat{L} + 1)$-hops away from $u$, i.e., $\phi^{\widehat{L}}_u(j)$, is defined as the average interacted ratio of $j$ with all neighboring items of $u$ in $\mathcal{N}_u^1$ through paths of length $\le 2\widehat{L}$:
\end{definition}
\vspace{-4ex}
\begin{equation}\label{eq-phi-test}
\small
    \phi^{\widehat{L}}_u(j) = \frac{1}{|\mathcal{N}_u^1|}\sum_{i\in\mathcal{N}_u^1}\sum_{l = 1}^{\widehat{L}}\alpha^{2l}\sum_{P_{ji}^{2l}\in\mathscr{P}_{ji}^{2l}}{\frac{1}{f(\{\mathcal{N}_k^1|k\in P_{ji}^{2l}\})}}, 
\end{equation}
\vspace{-2ex}

\noindent $\forall j\in\mathcal{N}_u^1, \forall u\in\mathcal{U},$ where $\{\mathcal{N}_k^1|k\in P_{ji}^{2l}\}$ represents the set of the $1$-hop neighborhood of node $k$ along the path $P_{ji}^{2l}$ from node $j$ to $i$ of length $2l$ including $i, j$. $f$ is a normalization function to differentiate the importance of different paths in $\mathscr{P}_{ji}^{2l}$ and its value depends on the neighborhood of each node along the path $P_{ji}^{2l}$. $\alpha^{2l}$ is the importance of paths of length $2l$. 

As shown in Figure~\ref{fig-analysis}(c), $\phi_u^{\widehat{L}}(j_1)$ is decided by paths of length between $2$ to $2\widehat{L}$. By configuring different $\widehat{L}$ and $f$, $\sum_{P_{ji}^{2l}\in\mathscr{P}_{ji}^{2l}}{\frac{1}{f(\{\mathcal{N}_k^1|k\in P_{ji}^{2l}\})}}$ could express many 
graph similarity metrics~\cite{leicht2006vertex, zhou2009predicting, newman2001clustering, salton1989automatic, liben2007link} and we discuss them in Appendix~\ref{sec-toponote}. For simplicity, henceforth we denote $\phi^{\widehat{L}}_u(j)$ as $ \phi_u(j)$. We next empirically verify the importance of leveraging collaborations from neighbors with higher CIR by incrementally adding edges into an initially edge-less graph according to their CIR and visualizing the performance change. Specifically, we consider the performance change in two settings, retraining and pretraining, which are visualized in Figure~\ref{fig-cir-retrain} and ~\ref{fig-cir-pretrain}, respectively. In both of these two settings, we iteratively cycle each node and add its corresponding neighbor according to the CIR until hitting the budget. Here we consider variants of CIR that we later define in Section~\ref{sec-experimentsetting} with further details in Appendix~\ref{sec-toponote}.

\begin{figure}[t!]
     \centering
     \vskip -1ex
     \hspace{-2.5ex}
     \includegraphics[width=0.5\textwidth]{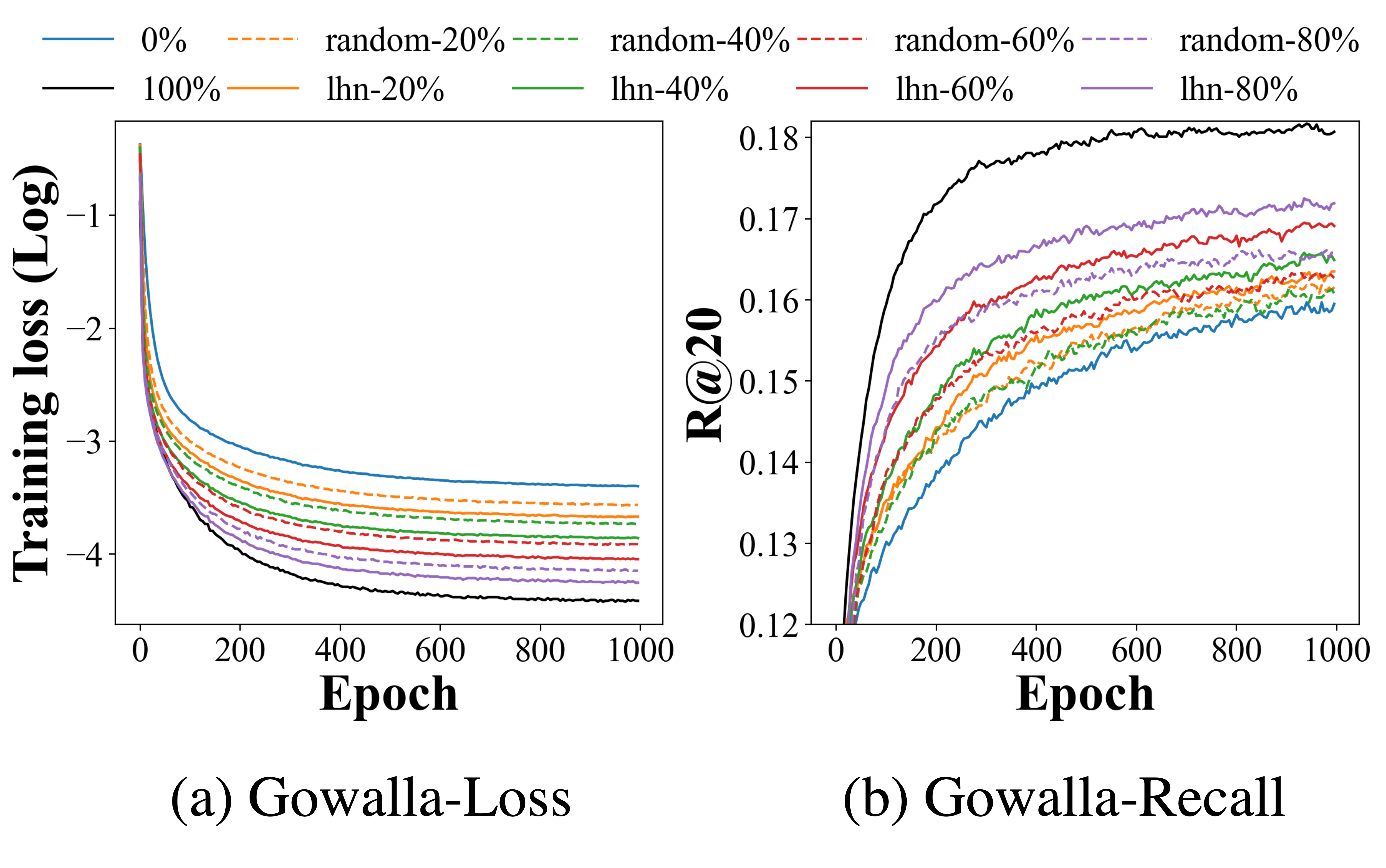}
     \vskip -3ex
     \caption{The training loss (left) is lower and the performance (right) is higher when adding edges according to the variant CIR-lhn (Leicht Holme Nerman) than adding randomly under the same addition budget. Detailed experimental settings and more results are provided in Appendix~\ref{app-res-construct}.}
     \label{fig-cir-retrain}
     \vspace{-4ex}
\end{figure}

For the re-training setting, we first remove all observed edges in the training set to create the edgeless bipartite graph and then incrementally add edges according to their CIR and retrain user/item embeddings. In Figure~\ref{fig-cir-retrain}, we evaluate the performance on the newly constructed bipartite graph under different edge budgets. Clearly, the training loss/performance becomes lower/higher when adding more edges because message-passing captures more collaborative effects. Furthermore, since edges with higher CIR connect neighbors with more connections to the whole neighborhood, optimizing embeddings of nodes incident to these edges pull the whole neighborhood closer and hence leads to the lower training loss over neighborhoods' connections, which causes the overall lower training loss in Figure~\ref{fig-cir-retrain}(a). In Figure~\ref{fig-cir-retrain}(b), we observe that under the same adding budget, adding according to CIRs achieves higher performance than adding randomly. It is because neighbors with higher interactions with the whole neighborhood are more likely to have higher interactions with neighbors to be predicted (We empirically verify this in Table~\ref{tab-simi}.). Then for each user, maximizing its embedding similarity to its training neighbors with higher CIR will indirectly improve its similarity to its to-be-predicted neighbors, which leads to lower population risk and higher generalization/performance.

\begin{figure}[htbp!]
     \centering
     \includegraphics[width=0.5\textwidth]{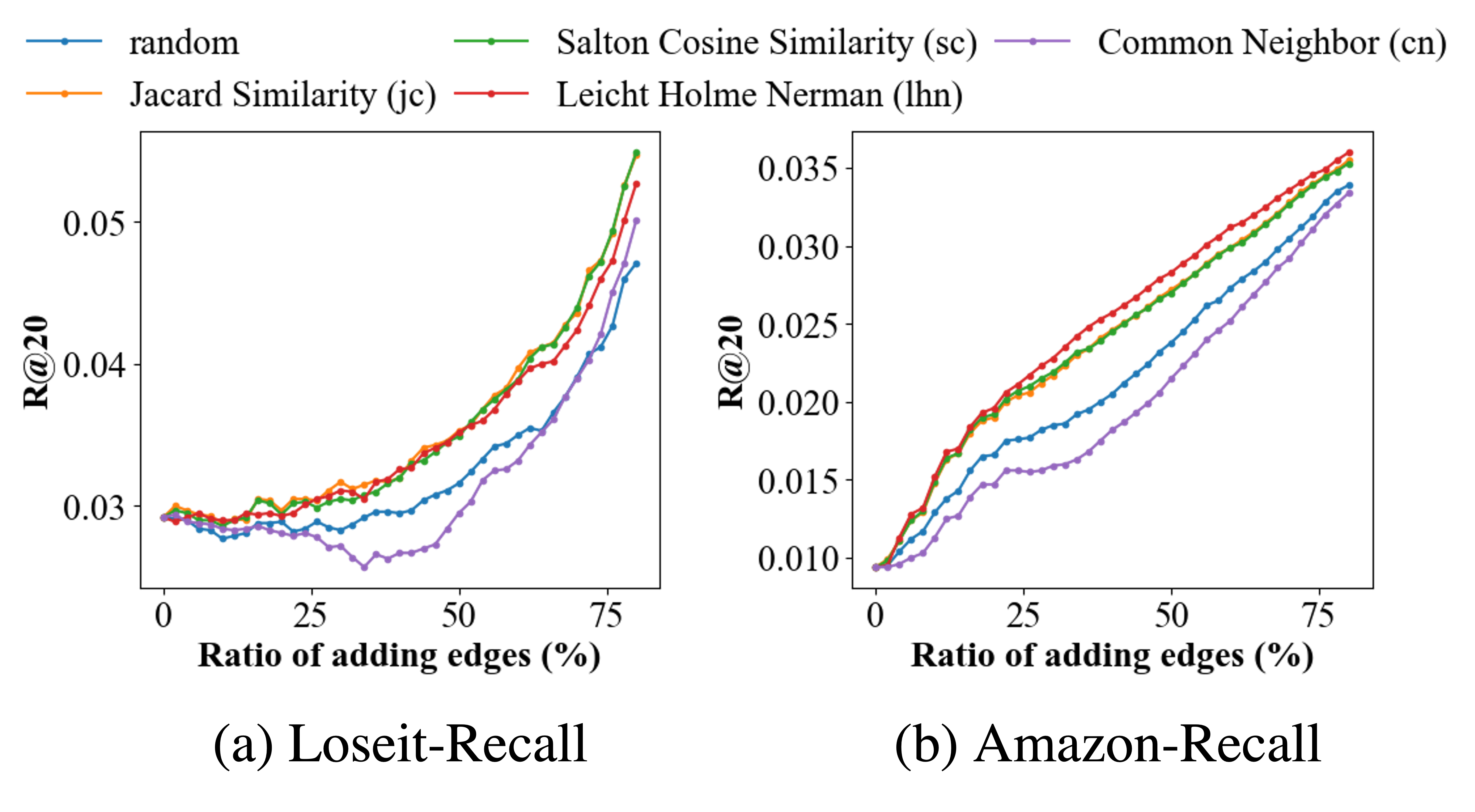}
     \vskip -4ex
     \caption{The performance of adding edges according to CIR variants generally increases faster than adding randomly after pre-training. 
     See Appendix~\ref{app-res-construct2} for more results.}
     \label{fig-cir-pretrain}
     \vskip -1ex
\end{figure}

For the pre-training setting, we first pre-train user/item embeddings on the original bipartite graph and then propagate the pre-trained embeddings on the newly constructed bipartite graph under different edge budgets. This setting is more realistic since in the real world, with the exponential interactions streamingly coming in~\cite{wang2020streaming} while the storage space is limited, we are forced to keep only partial interactions and the pre-trained user/item embeddings. Figure~\ref{fig-cir-pretrain} demonstrates that under the same adding budget, keeping edges according to CIR leads to higher performance than keeping randomly, which further verifies the effectiveness of CIR in quantifying the edge importance. An interesting observation is that adding more edges cannot always bring performance gain as shown in Figure~\ref{fig-cir-pretrain}(a) when the ratio of added edges is between 0\%-20\%. We hypothesize there are two reasons. From network topology, only when edges are beyond a certain level can the network form a giant component so that users could receive enough neighborhood information. Secondly, from representation learning, more nodes would have inconsistent neighborhood contexts between the training and the inference when only a few edges are added. Such inconsistent neighborhood context would compromise the performance and will be alleviated when more edges are added as shown later in Figure~\ref{fig-cir-pretrain}(a). Furthermore, different CIR variants cause different increasing speeds of performance. For example, sc is faster on Loseit in Figure~\ref{fig-cir-pretrain}(a) while lhn is faster on Amazon in Figure~\ref{fig-cir-pretrain}(b). Except for the cn, jc/sc/lhn lead to faster improvement than the random one, which highlights the potential of CIR in devising cost-effective strategies for pruning edges in the continual learning~\cite{wang2022lifelong}.

From the above analysis, we summarize the answer \boldsymbol{$A_2$} to \boldsymbol{$Q_2$} as: \textit{Leveraging collaborations from $u$'s neighboring node $j$ with higher CIR $\phi_u(j)$ would cause more benefits to $u$'s ranking.}

\section{Collaboration-Aware Graph Convolutional Networks}\label{sec-framework}

The former section demonstrates that passing messages according to neighbors' CIR is crucial in improving users' ranking. This motivates us to propose a new graph convolution operation, Collaboration-Aware Graph Convolution(CAGC), which passes node messages based on the benefits of their provided collaborations. Furthermore, we wrap the proposed CAGC within LightGCN and develop two CAGC-based models.

\subsection{Collaboration-Aware Graph Convolution}
The core idea of CAGC is to strengthen/weaken the messages passed from neighbors with higher/lower CIR to center nodes. To achieve this, we compute the edge weight as:
\begin{equation}\label{eq-sign}
    {\boldsymbol{\Phi}}_{ij} = \begin{cases}
    \phi_i(j), & \text{if $\mathbf{A}_{ij} > 0$}\\
    0, & \text{if $\mathbf{A}_{ij} = 0$}
    \end{cases}, \forall i, j \in \mathcal{V}
\end{equation}
where $\phi_{i}(j)$ is the CIR of neighboring node $j$ centering around $i$. Note that unlike the symmetric graph convolution $\mathbf{D}^{-0.5}\mathbf{A}\mathbf{D}^{-0.5}$ used in LightGCN, here $\boldsymbol{\Phi}$ is unsymmetric. This is rather interpretable: the interacting level of node $j$ with $i$'s neighborhood is likely to be different from the interacting level of node $i$ with $j$'s neighborhood. We further normalize $\boldsymbol{\Phi}$ and combine it with the LightGCN convolution:
\begin{equation}\label{eq-CAGCNaug}
    \mathbf{e}_i^{l+1} = \sum_{j\in\mathcal{N}_i^1}g(\gamma_i\frac{\boldsymbol{\Phi}_{ij}}{\sum_{k\in\mathcal{N}_i^1}{\boldsymbol{\Phi}_{ik}}}, d_i^{-0.5}d_j^{-0.5})\mathbf{e}_j^l, \forall i\in\mathcal{V}
\end{equation}

\noindent where $\gamma_i$ is a coefficient that varies the total amount of messages flowing to node $i$ and controls its embedding magnitude~\cite{park2020trap}. $g$ is a function combining the edge weights computed based on CIR and LightGCN. We could either simply set $g$ as the weighted summation of these two propagated embeddings or learn $g$ by parametrization. Next, we prove that for certain choices of $g$, CAGC can go beyond 1-WL in distinguishing non-bipartite-subgraph-isomorphic graphs. First, we prove the equivalence between the subtree-isomorphism and the subgraph-isomorphism in bipartite graphs:
\begin{thm}\label{thm-eq}
    In bipartite graphs, two subgraphs that are subtree-isomorphic if and only if they are subgraph-isomorphic\footnote{Definitions of subtree-/subgraph-isomorphism are in \textcolor{blue}{Supplementary}~\ref{app-graphiso}\cite{beyond}.}.
\end{thm}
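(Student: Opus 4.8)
The plan is to prove both directions of the equivalence, and the key observation that makes everything work is that a bipartite graph contains no odd cycles, in particular no triangles. Recall that subgraph-isomorphism always implies subtree-isomorphism (the subtree/unrolled-BFS structure is determined by the subgraph structure), so the nontrivial direction is: in bipartite graphs, subtree-isomorphism implies subgraph-isomorphism. Concretely, suppose $\mathcal{S}$ and $\mathcal{S}'$ are two subgraphs whose rooted unrolled computation trees (the objects 1-WL actually compares) coincide; I want to construct an actual graph isomorphism $\mathcal{S}\to\mathcal{S}'$.

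First I would set up notation: fix roots $r\in\mathcal{S}$, $r'\in\mathcal{S}'$ and let $T$, $T'$ be the associated subtrees, with the hypothesis $T\cong T'$ as rooted trees (with the multiset-of-children structure used in the 1-WL unrolling). Partition the vertices of $\mathcal{S}$ into BFS layers $V_0=\{r\}, V_1, V_2,\dots$ according to distance from $r$, and similarly for $\mathcal{S}'$. The core structural claim I would isolate and prove is: \emph{in a bipartite graph, every edge of $\mathcal{S}$ joins consecutive BFS layers $V_\ell$ and $V_{\ell+1}$} — there are no ``horizontal'' edges within a layer. This is exactly where bipartiteness is essential: a within-layer edge $\{u,v\}$ with $u,v\in V_\ell$ would, together with shortest paths from $r$ to $u$ and to $v$, produce an odd closed walk and hence an odd cycle, contradicting bipartiteness. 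Given this claim, the subtree $T$ is not just an abstract unrolling: each node of $\mathcal{S}$ at distance $\ell$ appears in $T$ at depth $\ell$, and the parent–child edges of $T$ are in bijection with the edges of $\mathcal{S}$ after identifying tree-nodes that correspond to the same graph-node.

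The heart of the argument is then to show that the map sending a tree-node to the graph-node it unrolls is well-defined and compatible on both sides, so that the rooted-tree isomorphism $T\cong T'$ descends to a graph isomorphism $\mathcal{S}\cong\mathcal{S}'$. I would do this layer by layer: define $\psi: V(\mathcal{S})\to V(\mathcal{S}')$ inductively, matching $r\mapsto r'$, and extending across each layer using the children-multiset isomorphism from $T\cong T'$; bipartiteness (no horizontal edges) guarantees that the only edges to account for when moving from layer $\ell$ to $\ell+1$ are parent–child edges, so the tree isomorphism sees all of them. I expect the main obstacle to be the bookkeeping of \emph{collisions}: distinct tree-nodes at the same depth may unroll to the same graph-node (that is precisely what distinguishes a tree from a graph with cycles — here, even cycles), and I must check that $T\cong T'$ respects this identification, i.e. two tree-nodes collapse to the same graph-node on the left if and only if their images under the tree isomorphism collapse on the right. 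This is where I would lean on the fact that, in a bipartite graph, two shortest paths from $r$ to the same vertex $w$ have the same length and their ``last edges'' land in the same layer, so the collapsing pattern is itself encoded in the rooted-tree structure (via which subtrees are isomorphic and attached at matching depths) and is therefore preserved by $T\cong T'$. Once well-definedness and surjectivity of $\psi$ are established and adjacency is checked on parent–child pairs, $\psi$ is the desired subgraph-isomorphism, completing the nontrivial direction; the easy direction I would dispatch in a line.
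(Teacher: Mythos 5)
Your proposal proves a different — and in fact stronger — statement than the one the paper makes, and its central step does not go through. In this paper the objects $\mathcal{S}_u$, $\mathcal{S}_i$ are the depth-one neighborhood subgraphs induced by $\widetilde{\mathcal{N}}_u^1=\mathcal{N}_u^1\cup\{u\}$, and subtree-isomorphism is merely a feature-preserving bijection $h:\widetilde{\mathcal{N}}_u^1\to\widetilde{\mathcal{N}}_i^1$ with $h(u)=i$; subgraph-isomorphism adds edge preservation among all pairs in the closed neighborhood. The only possible discrepancy between the two notions is the presence of edges among the neighbors themselves, and in a bipartite graph there are none: all of $\mathcal{N}_u^1$ lies on the opposite side of the bipartition from $u$, so $\mathcal{S}_u$ is a star whose every edge is incident to the center and is automatically respected by any center-preserving bijection. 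That is the entire content of the paper's forward direction (argued there by a short contradiction), with the converse being the trivial implication you also note. Your BFS-layer machinery, with layers $V_0,V_1,V_2,\dots$, is aimed at full unrolled 1-WL computation trees of arbitrary depth, which is not what the definitions here require.

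More importantly, the step you yourself flag as the heart of the argument — that the ``collapsing pattern'' of tree nodes onto graph nodes is encoded in the rooted-tree structure and hence preserved by the tree isomorphism — is false in general, even for connected bipartite graphs. Any two $2$-regular bipartite graphs, e.g.\ an even cycle of length $6$ and one of length $8$, have isomorphic rooted unrolled trees to every depth (every vertex has the same degree sequence pattern at every level), yet the graphs are not isomorphic; the difference lives precisely in which tree copies collide, and the tree isomorphism gives you no control over that. Your ``no horizontal edges'' observation is correct in bipartite graphs, but it only rules out within-layer edges; it does not prevent a vertex in layer $\ell+1$ from having several parents in layer $\ell$, which is exactly the collision phenomenon your induction cannot pin down. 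So the proposed descent from $T\cong T'$ to $\psi:\mathcal{S}\cong\mathcal{S}'$ cannot be completed as stated. For the theorem actually claimed, no such descent is needed: restrict attention to the closed one-hop neighborhoods and use bipartiteness to observe that the ego network is a star, which is the paper's (much shorter) argument.
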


\begin{proof}
\vspace{-1.5ex}
We prove this theorem in two directions. Firstly ($\Longrightarrow$), we prove that in a bipartite graph, two subgraphs that are subtree-isomorphic are also subgraph-isomorphic by contradiction. Assuming that there exists two subgraphs $\mathcal{S}_u$ and $\mathcal{S}_i$ that are subtree-isomorphic yet not subgraph-isomorphic in a bipartite graph, i.e., $\mathcal{S}_u\cong_{subtree} \mathcal{S}_i$ and $\mathcal{S}_u \not\cong_{subgraph} \mathcal{S}_i$. By definition of subtree-isomorphism, we trivially have $\mathbf{e}_v^l=\mathbf{e}_{h(v)}^l, \forall v\in\mathcal{V}_{\mathcal{S}_u}$. Then to guarantee $\mathcal{S}_u \not\cong_{subgraph} \mathcal{S}_i$ and also since edges are only allowed to connect $u$ and its neighbors $\mathcal{N}_u^1$ in the bipartite graph, there must exist at least an edge $e_{uv}$ between $u$ and one of its neighbors $v\in\mathcal{N}_u^1$ such that $e_{uv}\in\mathcal{E}_{\mathcal{S}_u}, e_{h(u)h(v)}\notin \mathcal{E}_{\mathcal{S}_i}$, which contradicts the assumption that $\mathcal{S}_u\cong_{subtree} \mathcal{S}_i$. Secondly ($\Longleftarrow$), we can prove that in a bipartite graph, two subgraphs that are subgraph-isomorphic are also subtree-isomorphic, which trivially holds since in any graph, subgraph-isomorphism leads to subtree-isomorphism~\cite{beyond}.
\end{proof}

Since 1-WL test can distinguish subtree-isomorphic graphs~\cite{beyond}, the equivalence between these two isomorphisms indicates that in bipartite graphs, both of the subtree-isomorphic graphs and subgraph-isomorphic graphs can be distinguished by 1-WL test. Therefore, to go beyond 1-WL in bipartite graphs, we need to propose a novel graph isomorphism, bipartite-subgraph-isomorphism in Definition~\ref{df-bisub}, which is even harder to be distinguished than the subgraph-isomorphism by 1-WL test.

\begin{definition}\label{df-bisub}
\textbf{Bipartite-subgraph-isomorphism:} 
$\mathcal{S}_u$ and $\mathcal{S}_i$ are bipartite-subgraph-isomorphic, denoted as $\mathcal{S}_u\cong_{bi-subgraph} \mathcal{S}_i$, if there exists a bijective mapping $h:\widetilde{\mathcal{N}}^1_{u}\cup\mathcal{N}^2_{u} \rightarrow \widetilde{\mathcal{N}}^1_{i}\cup\mathcal{N}^2_{i}$ such that $h(u) = i$ and $\forall v, v'\in\widetilde{\mathcal{N}}_u^1\cup\mathcal{N}_u^2$, $e_{vv'}\in\mathcal{E} \iff e_{h(v)h(v')}\in\mathcal{E}$ and $\mathbf{e}^l_{v} = \mathbf{e}^l_{h(v)}, \mathbf{e}^l_{v'} = \mathbf{e}^l_{h(v')}$.
\end{definition}

\begin{lemma}\label{lemma-injective}
    If $g$ is multilayer perceptron (MLP), then we have that $g(\{(\gamma_i\widetilde{\boldsymbol{\Phi}}_{ij}, \mathbf{e}_j^l)|j\in\mathcal{N}_i^1\}, \{(d_i^{-0.5}d_j^{-0.5}, \mathbf{e}_j^l)|j\in\mathcal{N}_i^1\})$ is injective.
\end{lemma}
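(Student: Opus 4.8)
As is standard in the GIN-style expressive-power analysis, the statement should be read as: there is a choice of MLP parameters for which the map sending the two neighbor multisets of node $i$ to $\mathbf{e}_i^{l+1}$ is injective --- here $g$ denotes the whole parametrized aggregate-and-combine operation behind Eq.~\eqref{eq-CAGCNaug}, of the form $X\mapsto\mathrm{MLP}_2\bigl(\sum_{x\in X}\mathrm{MLP}_1(x)\bigr)$, not the per-edge scalar weight, which alone could not be injective. The plan is to adapt the multiset-injectivity argument of Xu et al.\ to our two-multiset setting. First I would establish countability and bounded cardinality: $d_i^{-0.5}d_j^{-0.5}$ ranges over $\{(ab)^{-1/2}:a,b\in\mathbb{Z}^{+}\}$; by Eq.~\eqref{eq-phi-test} and Eq.~\eqref{eq-CAGCNaug} the normalized weights $\gamma_i\widetilde{\boldsymbol{\Phi}}_{ij}$ are rational functions of the integer path counts $|\mathscr{P}_{ji}^{2l}|$, the degrees along those paths, and a fixed $\gamma_i$, hence also lie in a countable set; and, restricting as usual to a countable universe of input graphs with rational initial embeddings $\mathbf{E}^{0}$, every $\mathbf{e}_j^l$ lies in a countable set. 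Thus the triples $(\gamma_i\widetilde{\boldsymbol{\Phi}}_{ij},\,d_i^{-0.5}d_j^{-0.5},\,\mathbf{e}_j^l)$ range over a countable set $\mathcal{Z}$, and a single aggregation contains $|\mathcal{N}_i^1|$ of them, bounded by the maximum degree of $\mathcal{G}$.

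Second, I would reduce the two input multisets to a single summation. Since $\{(\gamma_i\widetilde{\boldsymbol{\Phi}}_{ij},\mathbf{e}_j^l)\}_{j\in\mathcal{N}_i^1}$ and $\{(d_i^{-0.5}d_j^{-0.5},\mathbf{e}_j^l)\}_{j\in\mathcal{N}_i^1}$ are indexed by the same $\mathcal{N}_i^1$ and agree in the embedding coordinate, they carry precisely the information of the single multiset of triples $X_i=\{(\gamma_i\widetilde{\boldsymbol{\Phi}}_{ij},\,d_i^{-0.5}d_j^{-0.5},\,\mathbf{e}_j^l)\}_{j\in\mathcal{N}_i^1}$. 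By the multiset-uniqueness lemma --- for a countable ground set and multisets of bounded cardinality there is $\varphi:\mathcal{Z}\to\mathbb{R}^{M}$, e.g.\ $\varphi(z)=N^{-Z(z)}$ with $Z$ an enumeration of $\mathcal{Z}$ and $N$ larger than any attainable multiplicity, such that $X\mapsto\sum_{z\in X}\varphi(z)$ is injective on such multisets --- the sum $s_i=\sum_{z\in X_i}\varphi(z)$ determines $X_i$, hence both original multisets, so the map from the input pair to $s_i$ is injective.

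Third, I would close by expressivity. By the usual decomposition of multiset functions the target map equals $\rho\bigl(\sum_{j\in\mathcal{N}_i^1}\varphi(\cdot)\bigr)$ for a suitable $\rho$; because a finite dataset yields only finitely many distinct elements $z$ and sums $s_i$, the universal approximation theorem lets an MLP realize the element-wise encoder $\varphi$ and the readout $\rho$ --- hence the composite $g$ --- to accuracy finer than the smallest gap between the finitely many target outputs, which preserves injectivity. This exhibits an MLP $g$ for which the aggregator of Eq.~\eqref{eq-CAGCNaug} is injective, proving Lemma~\ref{lemma-injective}.

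The main obstacle is the second step: making rigorous that passing from the pair of pair-multisets to the single multiset of triples loses no information --- the shared $\mathbf{e}_j^l$ coordinate is exactly what legitimizes this --- and checking that the post-normalization weights $\gamma_i\widetilde{\boldsymbol{\Phi}}_{ij}$ really do inhabit a countable set despite the division in Eq.~\eqref{eq-CAGCNaug}. Once countability and bounded cardinality are secured, the remainder is the standard GIN-style template.
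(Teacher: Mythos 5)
Your proposal is correct in substance but takes a different route from the paper. The paper's own proof keeps the two inputs as two separate multisets: it argues countability of the embedding set (via a discretization assumption) and of the two weight sets, forms the multisets $P_1=\{\widetilde{\boldsymbol{\Phi}}_{ij}\mathbf{e}_j\}$ and $P_2=\{d_i^{-0.5}d_j^{-0.5}\mathbf{e}_j\}$ of weighted embeddings, and then invokes Lemma~1 of \cite{beyond}, which supplies a function $s$ such that the pairwise double sum $\pi(P_1,P_2)=\sum_{p_1\in P_1,p_2\in P_2}s(p_1,p_2)$ is distinct for distinct pairs of multisets; MLP universality~\cite{xu2018powerful} then closes the argument exactly as you do. You instead collapse the input to a single multiset of triples $\{(\gamma_i\widetilde{\boldsymbol{\Phi}}_{ij},d_i^{-0.5}d_j^{-0.5},\mathbf{e}_j^l)\}_{j\in\mathcal{N}_i^1}$ and run the classic GIN-style single-multiset argument with an explicit encoder $\varphi(z)=N^{-Z(z)}$, which is more elementary (no external pairwise-sum lemma) and more faithful to Eq.~\eqref{eq-CAGCNaug}, where both weights are attached to the same edge $(i,j)$. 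The one caveat concerns exactly the step you flag: your claim that the pair of pair-multisets ``carries precisely the information'' of the triple multiset is false in corner cases --- if two neighbors share the same $\mathbf{e}_j^l$ but have different weights, the alignment of CIR-weights with degree-weights is not recoverable from the two multisets alone. Under the edge-indexed reading of the lemma (both multisets indexed by the same $j\in\mathcal{N}_i^1$, which is what Eq.~\eqref{eq-CAGCNaug} computes and what the proof of Theorem~\ref{thm-1WL} uses), the alignment is part of the input and your reduction is legitimate; the paper's double-sum formulation buys freedom from this alignment question at the cost of importing the lemma from \cite{beyond}. Both proofs share the same (standard) looseness in passing from ``universal approximator'' to ``realizes the injective map,'' which you mitigate reasonably with the finite-gap argument.
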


\begin{proof}
If we assume that all node embeddings share the same discretization precision, then embeddings of all nodes in a graph can form a countable set $\mathcal{H}$. Similarly, for each edge in a graph, its CIR-based weight $\widetilde{\boldsymbol{\Phi}}_{ij}$ and degree-based weight $d_i^{-0.5}d_j^{-0.5}$ can also form two different countable sets $\mathcal{W}_1, \mathcal{W}_2$ with $|\mathcal{W}_1| = |\mathcal{W}_2|$. Then $\mathcal{P}_1 = \{\widetilde{\boldsymbol{\Phi}}_{ij}\mathbf{e}_i|\widetilde{\boldsymbol{\Phi}}_{ij}\in \mathcal{W}_1, \mathbf{e}_i\in \mathcal{H}\}, \mathcal{P}_2 = \{d_i^{-0.5}d_j^{-0.5}\mathbf{e}_i|d_i^{-0.5}d_j^{-0.5}\in \mathcal{W}_2, \mathbf{e}_i\in \mathcal{H}\}$ are also two countable sets. Let $P_1, P_2$ be two multisets containing elements from $\mathcal{P}_1$ and $\mathcal{P}_2$, respectively, and $|P_1| = |P_2|$. Then by Lemma 1 in \cite{beyond}, there exists a function $s$ such that $\pi(P_1, P_2) = \sum_{p_1 \in P_1, p_2 \in P_2}s(p_1, p_2)$ is unique for any distinct pair of multisets $(P_1, P_2)$. Since the MLP-based g is a universal approximator~\cite{xu2018powerful} and hence can learn $s$, we know that $g$ is injective.
\end{proof}

\begin{thm}\label{thm-1WL}
    Let M be a GNN with sufficient number of CAGC-based convolution layers defined by Eq.~\eqref{eq-CAGCNaug}. If $g$ is MLP, then M is strictly more expressive than 1-WL in distinguishing subtree-isomorphic yet non-bipartite-subgraph-isomorphic graphs. 
\end{thm}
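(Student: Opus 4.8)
The plan is to decompose ``strictly more expressive'' into the usual two halves and discharge each using the tools already established: (i) $M$ is never weaker than 1-WL, which follows from the injectivity in Lemma~\ref{lemma-injective}; and (ii) $M$ separates at least one pair of graphs that are subtree-isomorphic -- hence, by Theorem~\ref{thm-eq}, subgraph-isomorphic and so given identical outputs by 1-WL~\cite{beyond} -- but that are \emph{not} bipartite-subgraph-isomorphic in the sense of Definition~\ref{df-bisub}.

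For (i) I would reuse the GIN-style simulation argument~\cite{xu2018powerful,beyond}. With $g$ an MLP, Lemma~\ref{lemma-injective} says a CAGC layer of Eq.~\eqref{eq-CAGCNaug} maps node $i$ to an injective function of the pair of weighted neighbor multisets $\{(\gamma_i\widetilde{\boldsymbol{\Phi}}_{ij},\mathbf{e}_j^l)\mid j\in\mathcal{N}_i^1\}$ and $\{(d_i^{-0.5}d_j^{-0.5},\mathbf{e}_j^l)\mid j\in\mathcal{N}_i^1\}$. The second of these determines an injective encoding of the plain neighbor-color multiset $\{\!\{\mathbf{e}_j^l\mid j\in\mathcal{N}_i^1\}\!\}$ used in one 1-WL round, while the center node's own color is carried along by the layer-wise aggregation of Eq.~\eqref{eq-mp}; so, inducting over layers (with ``sufficiently many'' meaning at least as many as the 1-WL refinement rounds needed), any two graphs to which 1-WL assigns distinct color histograms are assigned distinct node-embedding multisets by $M$.

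For (ii) I would construct the witness. By Theorem~\ref{thm-eq}, 1-WL maps every subtree-isomorphic (equivalently, in bipartite graphs, subgraph-isomorphic) pair to the same output, so it suffices to exhibit \emph{one} pair of rooted bipartite neighborhoods whose WL-unfolding trees agree up to the relevant depth but whose $2$-hop structure around the root differs -- e.g.\ two interacted items of the root $u$ share an extra common $2$-hop neighbor in one graph and not in the other, so that no edge-preserving bijection on $\widetilde{\mathcal{N}}^1_u\cup\mathcal{N}^2_u$ exists and the two are not bipartite-subgraph-isomorphic. Because the CIR $\phi_u(j)$ of Eq.~\eqref{eq-phi-test} sums weighted paths of length up to $2\widehat{L}\ge 2$ from $j$ to the other neighbors of $u$, this common-$2$-hop-neighbor discrepancy alters the raw weights $\boldsymbol{\Phi}_{uj}$ and, in the chosen (asymmetric) example, their row-normalized forms $\widetilde{\boldsymbol{\Phi}}_{uj}=\boldsymbol{\Phi}_{uj}/\sum_{k\in\mathcal{N}_u^1}\boldsymbol{\Phi}_{uk}$ as well; hence the weighted neighbor multiset fed to $g$ differs across the two graphs, and by Lemma~\ref{lemma-injective} so does $\mathbf{e}_u^{l+1}$, and after enough layers the final representations. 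Combined with (i), this gives that $M$ is strictly more expressive than 1-WL on subtree-isomorphic yet non-bipartite-subgraph-isomorphic graphs.

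The hardest part will be the witness in (ii). The $2$-hop difference must survive the row-normalization of $\boldsymbol{\Phi}$: a symmetric example such as $C_8$ versus $C_4\sqcup C_4$ is annihilated because, by symmetry, every edge incident to a degree-$2$ root receives normalized weight $1/2$ in both graphs, so one needs an asymmetric construction with a root of degree $\ge 3$, while simultaneously checking that the WL-unfolding trees still coincide so that 1-WL genuinely fails to separate the pair. A secondary point to pin down is what ``sufficient number of layers'' must be -- enough both to carry out the 1-WL simulation in (i) and to let the $2$-hop-dependent edge weights reach the root -- and, in (i), one should take the cross-layer aggregation to be injective (e.g.\ concatenation rather than the mean of Eq.~\eqref{eq-mp}), or endow CAGC with self-loops, since a self-loop-free message passing with a non-injective cross-layer readout can be strictly weaker than 1-WL.
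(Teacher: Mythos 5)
Your plan follows essentially the same route as the paper's proof: the forward direction is the same injectivity-based simulation of 1-WL via Lemma~\ref{lemma-injective} (the paper argues it by contradiction at the first 1-WL round that separates the two graphs), and the backward direction is likewise a witness pair of subtree/subgraph-isomorphic yet non-bipartite-subgraph-isomorphic rooted neighborhoods whose differing $2$-hop structure changes the CIR-based edge weights, which the injective $g$ then separates. The concrete witness you leave open is exactly what the paper supplies in Figure~\ref{fig-example} of Supplementary~\ref{app-graphiso} -- two degree-$3$ roots $u,u'$ with identical neighbor features where the JC-based weights differ -- consistent with your (correct) observation that a symmetric, degree-$2$ construction would be annihilated by the row normalization of $\boldsymbol{\Phi}$.
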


\begin{proof}
We prove this theorem in two directions. Firstly ($\Longrightarrow$), following~\cite{beyond}, we prove that the designed CAGC here can distinguish any two graphs that are distinguishable by 1-WL by contradiction. Assume that there exist two graphs $\mathcal{G}_1$ and $\mathcal{G}_2$ which can be distinguished by 1-WL but cannot be distinguished by CAGC. Further, suppose that 1-WL cannot distinguish these two graphs in the iterations from $0$ to $L-1$, but can distinguish them in the $L^{\text{th}}$ iteration. Then, there must exist two neighborhood subgraphs $\mathcal{S}_u$ and $\mathcal{S}_i$ whose neighboring nodes correspond to two different sets of node labels at the $L^{\text{th}}$ iteration, i.e., $\{\mathbf{e}_v^l|v\in\mathcal{N}_u^1\}\ne\{\mathbf{e}_j^l|j\in\mathcal{N}_i^1\}$. Since $g$ is injective by Lemma~\ref{lemma-injective}, for $\mathcal{S}_u$ and $\mathcal{S}_i$, $g$ would yield two different feature vectors at the $L^{\text{th}}$ iteration.
This means that CAGC can also distinguish $\mathcal{G}_1$ and $\mathcal{G}_2$, which contradicts the assumption. 

Secondly ($\Longleftarrow$), we prove that there exist at least two graphs that can be distinguished by CAGC but cannot be distinguished by 1-WL. Figure~\ref{fig-example} in \textcolor{blue}{Supplementary}~\ref{app-graphiso} presents two of such graphs $\mathcal{S}_u, \mathcal{S}_u'$, which are subgraph isomorphic but non-bipartite-subgraph-isomorphic. Assuming $u$ and $u'$ have exactly the same neighborhood feature vectors $\mathbf{e}$, then directly propagating according to 1-WL or even considering node degree as the edge weight as GCN~\cite{gcn} can still end up with the same propagated feature for $u$ and $u'$. However, if we leverage JC to calculate CIR as introduced in Appendix~\ref{sec-toponote}, then we end up with $\{(d_ud_{j_1})^{-0.5}\mathbf{e}, (d_ud_{j_2})^{-0.5}\mathbf{e}, (d_ud_{j_3})^{-0.5}\mathbf{e}\} \ne \{(d_{u'}^{-0.5}d_{j'_1}^{-0.5} + \boldsymbol{\widetilde{\Phi}}_{u'j'_1})\mathbf{e}, (d_{u'}^{-0.5}d_{j'_2}^{-0.5} + \boldsymbol{\widetilde{\Phi}}_{u'j'_2})\mathbf{e}, (d_{u'}^{-0.5}d_{j'_3}^{-0.5} + \boldsymbol{\widetilde{\Phi}}_{u'j'_3})\mathbf{e}\}$. Since $g$ is injective by Lemma~\ref{lemma-injective}, CAGC would yield two different embeddings for $u$ and $u'$.
\end{proof}

Theorem~\ref{thm-1WL} indicates that GNNs whose aggregation scheme is CAGC can distinguish non-bipartite-subgraph-isomorphic graphs that are indistinguishable by 1-WL.

\subsection{Model Architecture and Complexity}
Following the principle of LightGCN that the designed graph convolution should be light and easy to train, except for the message-passing component, all other components of our proposed CAGC-based models is exactly the same as LightGCN including the average pooling and the model training, which have already been covered in Section~\ref{sec-analysis}. We provide the detailed time/space complexity comparison between our models and all other baselines in Appendix~\ref{sec-complexity}.  We visualize the architecture of CAGC-based models in Figure~\ref{fig-model1}. Based on the choice of $g$, we have two specific model variants. For the first variant CAGCN, we calculate the edge weight solely based on CIR in message-passing by setting $g(A, B) = A$ in Eq.\eqref{eq-CAGCNaug} and set $\gamma_i = \sum_{r \in \mathcal{N}_i^1}d_i^{-0.5}d_r^{-0.5}$ to ensure that the total edge weights for messages received by each node are the same as the one in LightGCN. For CAGCN*, we set $g$ as the weighted summation and set $\gamma_i = \gamma$ as a constant controlling the trade-off between contributions from message-passing by LightGCN and by CAGC. We term the model variant as CAGCN(*)-jc if we use Jaccard Similarity (JC)~\cite{liben2007link} to compute $\boldsymbol{\Phi}$. The same rule applies to other topological metrics listed in Appendix~\ref{sec-toponote}. Concrete equations of CAGCN and CAGCN* are provided in Appendix~\ref{app-variant}.

\begin{figure}[ht!]
     \centering
     \vskip -3ex
     \hspace{-3ex}
     \includegraphics[width=0.495\textwidth]{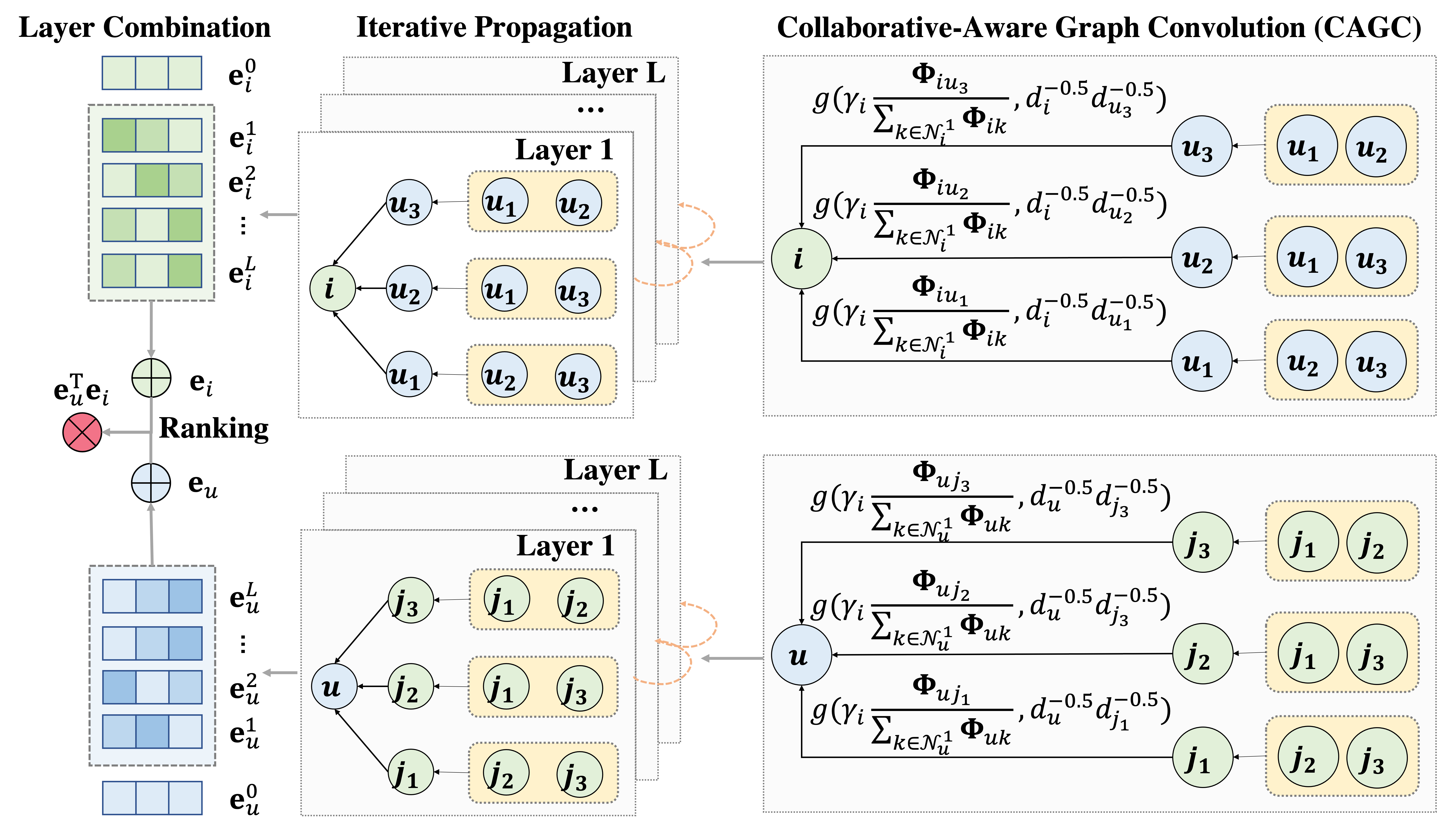}
     \vskip -2ex
     \caption{The architecture of the proposed CAGCN(*).}
     \label{fig-model1}
     \vspace{-4ex}
\end{figure}

\vspace{-1.5ex}
\section{Experiments}\label{sec-experiments}
In this section, we conduct experiments to evaluate CAGCN(*).
\vspace{-2ex}
\subsection{Experimental Settings}\label{sec-experimentsetting}
\subsubsection{Datasets.} Following~\cite{lightgcn, ngcf}, we validate the proposed approach on \textbf{Gowalla}, \textbf{Yelp}, \textbf{Amazon}, and \textbf{Ml-1M}, the details of which are provided in~\cite{lightgcn, ngcf}. Moreover, we collect two extra datasets to further demonstrate the superiority of our proposed model in even broader user-item interaction domains: \textbf{(1) Loseit}: This dataset is collected from subreddit \textit{loseit - Lose the Fat}\footnote{https://www.reddit.com/r/loseit/} from March 2020 to March 2022 where users discuss healthy and sustainable methods of losing weight via posts. To ensure the quality of this dataset, we use the 10-core setting~\cite{he2016vbpr}, i.e., retaining users and posts with at least ten interactions. \textbf{(2) News}: This dataset includes the interactions from subreddit \textit{World News}\footnote{https://www.reddit.com/r/worldnews/} where users share major news around the world via posts. Similarly, we use the 10-core setting to ensure the quality of this dataset. We summarize the statistics of all six datasets in Table~\ref{tab-dataset}.

\begin{table*}[t]
\vspace{-0.5ex}
\small
\setlength{\extrarowheight}{.125pt}
\setlength\tabcolsep{6pt}
\caption{Performance comparison of CAGCN(*) with baselines. The best and runner-up results are in \textbf{bold} and \underline{underlined}.}
\vskip -3ex
\label{tab-resfull}
\begin{tabular}{lc|cccc|cccc|ccc}
\Xhline{2.5\arrayrulewidth}
\multirow{2}{*}{\textbf{Model}} & \multicolumn{1}{c|}{\multirow{2}{*}{\textbf{Metric}}} & \multicolumn{1}{c}{\multirow{2}{*}{\textbf{MF}}} & \multicolumn{1}{c}{\multirow{2}{*}{\textbf{NGCF}}} & \multicolumn{1}{c}{\multirow{2}{*}{\textbf{LightGCN}}} & \multicolumn{1}{c|}{\multirow{2}{*}{\textbf{UltraGCN}}} & \multicolumn{4}{c|}{\textbf{CAGCN}} & \multicolumn{3}{c}{\textbf{CAGCN*}} \\
 & \multicolumn{1}{c|}{} & \multicolumn{1}{c}{} & \multicolumn{1}{c}{} & \multicolumn{1}{c}{} & \multicolumn{1}{c|}{} & \textbf{-jc} & \textbf{-sc} & \textbf{-cn} & \textbf{-lhn} & \textbf{-jc} & \textbf{-sc} & \textbf{-lhn} \\
\Xhline{2.5\arrayrulewidth}
\multirow{2}{*}{Gowalla} & Recall@20 & 0.1554 & 0.1563 & 0.1817 &\underline{0.1867} & 0.1825 & 0.1826 & 0.1632 & 0.1821 & \textbf{0.1878} & \textbf{0.1878} & 0.1857 \\
 & NDCG@20 & 0.1301 & 0.1300 & 0.1570 & 0.1580 & 0.1575 & 0.1577 & 0.1381 & 0.1577 & \textbf{0.1591} & \underline{0.1588} & 0.1563 \\
\hline
\multirow{2}{*}{Yelp2018} & Recall@20 & 0.0539 & 0.0596 & 0.0659 & 0.0675 & 0.0674 & 0.0671 & 0.0661 & 0.0661 & \underline{0.0708} & \textbf{0.0711} & 0.0676 \\
 & NDCG@20 & 0.0460 & 0.0489 & 0.0554 & 0.0553 & 0.0564 & 0.0560 & 0.0546 & 0.0555 & \underline{0.0586} & \textbf{0.0590} & 0.0554 \\
\hline
\multirow{2}{*}{Amazon} & Recall@20 & 0.0337 & 0.0336 & 0.0420 & \textbf{0.0682} & 0.0435 & 0.0435 & 0.0403 & 0.0422 & \underline{0.0510} & 0.0506 & 0.0457 \\
 & NDCG@20 & 0.0265 & 0.0262 & 0.0331 & \textbf{0.0553} & 0.0343 & 0.0342 & 0.0321 & 0.0333 & \underline{0.0403} & 0.0400 & 0.0361 \\
\hline
\multirow{2}{*}{Ml-1M} & Recall@20 & 0.2604 & 0.2619 & 0.2752 & 0.2783 & 0.2780 & 0.2786 & 0.2730 & 0.2760 & \underline{0.2822} & \textbf{0.2827} & 0.2799 \\
 & NDCG@20 & 0.2697 & 0.2729 & 0.2820 & 0.2638 & \underline{0.2871} & \textbf{0.2881} & 0.2818 & \underline{0.2871} & 0.2775 & 0.2776 & 0.2745 \\
\hline
\multirow{2}{*}{Loseit} & Recall@20 & 0.0539 & 0.0574 & 0.0588 & 0.0621 & 0.0622 & 0.0625 & 0.0502 & 0.0592 & \underline{0.0654} & \textbf{0.0658} & \textbf{0.0658} \\
 & NDCG@20 & 0.0420 & 0.0442 & 0.0465 & 0.0446 & 0.0474 & 0.0470 & 0.0379 & 0.0461 & \underline{0.0486} & 0.0484 & \textbf{0.0489} \\
\hline
\multirow{2}{*}{News} & Recall@20 & 0.1942 & 0.1994 & 0.2035 & 0.2034 & 0.2135 & 0.2132 & 0.1726 & 0.2084 & \textbf{0.2182} & \underline{0.2172} & 0.2053 \\
 & NDCG@20 & 0.1235 & 0.1291 & 0.1311 & 0.1301 & 0.1385 & 0.1384 & 0.1064 & 0.1327 & \underline{0.1405} & \textbf{0.1414} & 0.1311 \\
\Xhline{2.5\arrayrulewidth}
\multicolumn{1}{r}{\multirow{2}{*}{\textbf{Avg. Rank}}} & Recall@20 & 9.83 & 9.17 & 7.33 & 4.17 & 4.67 & 4.33 & 8.83 & 6.17 & \underline{1.67} & \textbf{1.50} & 3.33  \\
 & NDCG@20 & 9.50 & 9.17 & 5.83 & 6.00 &  \underline{3.67} & 4.00 &  8.33 & 5.00 &  \textbf{2.50} & \textbf{2.50} & 5.17 \\
\Xhline{2.5\arrayrulewidth}
\end{tabular}

\begin{tablenotes}
      \small
      \centering
      \item \textbf{jc}-Jacard Similarity, \textbf{sc}-Salton Cosine Similarity, \textbf{cn}-Common Neighbors, \textbf{lhn}-Leicht-Holme-Nerman
\end{tablenotes}

\vskip -1ex
\vspace{-1ex}

\end{table*}

\begin{table}[htbp!]
\normalsize
\caption{Basic dataset statistics.}
\centering
\vspace{-3ex}
\begin{center}
\begin{tabular}{lcccc}
\Xhline{2\arrayrulewidth}
\textbf{Dataset} & \# \textbf{Users} & \# \textbf{Items} & \# \textbf{Interactions} & \textbf{Density} \\
\Xhline{2\arrayrulewidth}
Gowalla & 29, 858 & 40, 981 & 1, 027, 370 & 0.084\% \\
Yelp & 31, 668 & 38, 048 & 1, 561, 406 & 0.130\% \\
Amazon & 52, 643 & 91, 599 & 2, 984, 108 & 0.062\% \\
Ml-1M & 6, 022 & 3, 043 & 895, 699 & 4.888\% \\
Loseit & 5, 334 & 54, 595 & 230, 866 & 0.08\%\\
News & 29, 785 & 21, 549 & 766, 874 & 0.119\%\\
\Xhline{2\arrayrulewidth}
\end{tabular}
\end{center}
\vskip -1ex
\begin{tablenotes}
  \small
  \item \hspace{2ex} 
  \textbf{*}\textbf{Yelp}: Yelp2018; \textbf{*}\textbf{Amazon}: Amazon-Books;\textbf{*}\textbf{Ml-1M}: Movielens-1M.
\end{tablenotes}
\label{tab-dataset}
\vskip -3ex
\end{table}

\vspace{-2.5ex}
\subsubsection{Baseline methods.} We compare our model with MF, NGCF, LightGCN, UltraGCN, GTN~\cite{bpr, ngcf, lightgcn, ultragcn, GTN}. Details of them are clarified in Appendix~\ref{app-baseline}. Since here the purpose is to evaluate the effectiveness of CAGC-based message-passing, we only compare with baselines that focus on graph convolution (besides the classic MF) including the state-of-the-art GNN-based recommendation models (i.e., UltraGCN and GTN). Note that our work could be further enhanced if incorporating other techniques such as contrastive learning to derive self-supervision but stacking these would sidetrack the main topic of this paper, graph convolution, so we leave them as one future direction.


\vspace{-1.5ex}
\subsubsection{Evaluation Metrics}
Two popular metrics: Recall and Normalized Discounted Cumulative Gain(NDCG)~\cite{ngcf} are adopted for evaluation. We set the default value of K as 20 and report the average of Recall@20 and NDCG@20 over all users in the test set. During inference, we treat items that the user has never interacted with in the training set as candidate items. All models predict users' preference scores over these candidate items and rank them based on the computed scores to further calculate Recall@20 and NDCG@20.

\vspace{-2ex}
\subsection{Performance Comparison}

We first compare our proposed CAGCN-variants with LightGCN. In Table~\ref{tab-resfull}, CAGCN-jc/sc/lhn achieves higher performance than LightGCN because we aggregate more information from nodes with higher CIR(jc, sc, lhn) that bring more beneficial collaborations as justified in Section~\ref{sec-CIRbenefit}. CAGCN-cn generally performs worse than LightGCN because nodes having more common neighbors with other nodes tend to have higher degrees and blindly aggregating information more from these nodes would cause false-positive link prediction. Since different datasets exhibit different patterns of $2^{\text{nd}}$-order connectivity, there is no fixed topological metric that performs the best among all datasets. For example, CAGCN-jc performs better than CAGCN-sc on Yelp and News, while worse on Gowalla, Ml-1M. 

Then, we compare CAGCN*-variants with other baselines. We omit CAGCN*-cn here due to the worse performance of CAGCN-cn than LightGCN. We can see that CAGCN*-jc/sc almost consistently achieves higher performance than other baselines except for UltraGCN on Amazon. This is because UltraGCN allows multiple negative samples for each positive interaction, e.g., 500 negative samples here on Amazon\footnote{
\href{https://github.com/xue-pai/UltraGCN}
{\textcolor{blue}{UltraGCN}} negative samples: 1500/800/500/200 on Gowalla/Yelp2018/Amazon/Ml-1M.}, which lowers the efficiency as we need to spend more time preparing a large number of negative samples per epoch. Among the baselines, UltraGCN exhibits the strongest performance because it approximates the infinite layers of message passing and constructs the user-user graphs to capture 2$^{\text{nd}}$-order connectivity. LightGCN and NGCF perform better than MF since they inject the collaborative effect directly through message-passing.

\begin{table}[t!]
\setlength{\extrarowheight}{.095pt}
\setlength\tabcolsep{6pt}
\caption{Performance comparison of CAGCN* with GTN.}
\label{tab-gtn}
\vskip -3ex
\centering
\small
\begin{tabular}{lc|c|cccc}
\Xhline{2\arrayrulewidth}
\multirow{2}{*}{\textbf{Model}} & \multicolumn{1}{c|}{\multirow{2}{*}{\textbf{Metric}}} & \multicolumn{1}{c|}{\multirow{2}{*}{\textbf{GTN}}} & \multicolumn{3}{c}{\textbf{CAGCN*}} \\
  &  &  & \textbf{-jc} & \textbf{-sc} & \textbf{-lhn} \\
\Xhline{2\arrayrulewidth}
\multirow{2}{*}{Gowalla} & Recall@20 & 0.1870 &  \textbf{0.1901} & \underline{0.1899} & 0.1885\\
 & NDCG@20 & 0.1588 & \textbf{0.1604} & \underline{0.1603} & 0.1576\\
 \Xhline{2\arrayrulewidth}
\multirow{2}{*}{Yelp2018} & Recall@20 & 0.0679 & \textbf{0.0731} & \underline{0.0729} & 0.0689\\
 & NDCG@20 & 0.0554 & \textbf{0.0605} & \underline{0.0601} & 0.0565 & \\
 \Xhline{2\arrayrulewidth}
\multirow{2}{*}{Amazon} & Recall@20 & 0.0450 & \underline{0.0573} & \textbf{0.0575} & 0.0520\\
 & NDCG@20 & 0.0346 & \underline{0.0456} & \textbf{0.0458} & 0.0409\\
 \Xhline{2\arrayrulewidth}
\end{tabular}
\vskip -2ex
\vspace{-2ex}
\end{table}
To align the setting with GTN, we increase the embedding size $d^0$ to 256 following~\cite{GTN}\footnote{As the user/item embedding is a significant hyperparameter, it is crucial to ensure the same embedding size when comparing models; thus, we separately compare against \href{https://github.com/wenqifan03/GTN-SIGIR2022}{\textcolor{blue}{GTN}} using their larger embedding size.} and observe the consistent superiority of our model over GTN in Table~\ref{tab-gtn}. This is because in GTN~\cite{GTN}, the edge weights for message-passing are still computed based on node embeddings that implicitly encode noisy collaborative signals from unreliable interactions. Conversely, our CAGCN* directly alleviates the propagation on unreliable interactions based on its CIR value, which removes noisy interactions from the source.

\subsection{Efficiency Comparison}\label{sec-effcompare}
As recommendation models will be eventually deployed in user-item data of real-world scale, it is crucial to compare the efficiency of the proposed CAGCN(*) with other baselines. To guarantee a fair comparison, we use a uniform code framework implemented ourselves for all models and run them on the same machine with Ubuntu 20.04 system, AMD Ryzen 9 5900 12-Core Processor (3.0 GHz), 128 GB RAM and GPU NVIDIA GeForce RTX 3090. We report the Recall@20 on Yelp and NDCG@20 on Loseit achieved by the best CAGCN(*) variant based on Table~\ref{tab-resfull}. We track the performance and the training time per 5 epochs. Complete results are included in \textcolor{blue}{Supplementary}~\ref{app-effcompare}. In Figure~\ref{fig-train}(a)-(b), CAGCN achieves higher performance than LightGCN in less time. We hypothesize that for each user, its neighbors with higher interactions with its whole neighborhood would also have higher interactions with its interacted but unobserved neighbors. Then as CAGCN aggregate more information from these observed neighbors that have higher interactions with the whole neighborhood, it indirectly enables the user to aggregate more information from its to-be-predicted neighbors. 

To verify the above hypothesis, we define the to-be-predicted neighborhood set of user $u$ in the testing set as $\widehat{\mathcal{N}}^1_u$ and for each neighbor $j\in\mathcal{N}_u^1$, calculate its CIR $\widehat{\phi}_u^{\widehat{L}}(j)$ with nodes in $\widehat{\mathcal{N}}^1_u$. Then we compare the ranking consistency among CIRs calculated from training neighborhoods (i.e., $\phi_u(j)$), from testing neighborhoods (i.e., $\widehat{\phi}_u(j)$) and from full neighborhoods (we replace $\widehat{\mathcal{N}}_u^1$ with $\mathcal{N}_u^1\cup$ $\widehat{\mathcal{N}}_u^1$ in Eq.~\eqref{eq-phi-test}). Here we respectively use four topological metrics (JC, SC, LHN, and CN) to define $f$ and rank the obtained three lists. Then, we measure the similarity of the ranked lists between Train-Test and between Train-Full by Rank-Biased Overlap (RBO)~\cite{rbo}. The averaged RBO values over all nodes $v\in\mathcal{V}$ on three datasets are shown in Table~\ref{tab-simi}. It is clear that the RBO values on all these datasets are beyond 0.5, which verifies our hypothesis. The RBO value between Train-Full is always higher than the one between Train-Test because most interactions are in the training set.

\begin{table}[t]
\footnotesize
\setlength{\extrarowheight}{.095pt}
\setlength\tabcolsep{3pt}
\caption{Efficiency comparison of CAGCN* with LightGCN. For fair comparison, we track the first time CAGCN* achieves the best performance of LightGCN.}
\label{tab-efficiency2}
\vskip -3ex
\centering
\begin{tabular}{llllllll}
 \Xhline{2\arrayrulewidth}
\textbf{Model} & \textbf{Stage} & \textbf{Gowalla} & \textbf{Yelp} & \textbf{Amazon} & \textbf{Ml-1M} & \textbf{Loseit} & \textbf{News} \\
 \Xhline{2\arrayrulewidth}
 LightGCN& Training & 16432.0 & 28788.0 & 81976.5 & 18872.3 & 39031.0 & 13860.8 \\

\Xhline{2\arrayrulewidth}
\multirow{3}{*}{CAGCN*} & Preprocess & 167.4 & 281.6 & 1035.8 & 33.8 & 31.4 & 169.0 \\
 & Training & 2963.2 & 1904.4 & 1983.9 & 11304.7 & 10417.7 & 1088.4 \\
 & Total & 3130.6 & 2186.0 & 3019.7 & 11338.5 & 10449.1 & 1157.4 \\
\Xhline{2\arrayrulewidth}
\multirow{2}{*}{\textbf{Improve}}& Training & 82.0\% & 93.4\% & 97.6\% & 40.1\% & 73.3\% & 92.1\%\\
 & Total & 80.9\% & 92.4\% & 96.3\% & 39.9\% & 73.2\% & 91.6\%\\
\Xhline{2\arrayrulewidth}
\end{tabular}
\vskip -2ex
\end{table}

\begin{figure}[t]
\vskip -1ex
     \centering
     \includegraphics[width=0.48\textwidth]{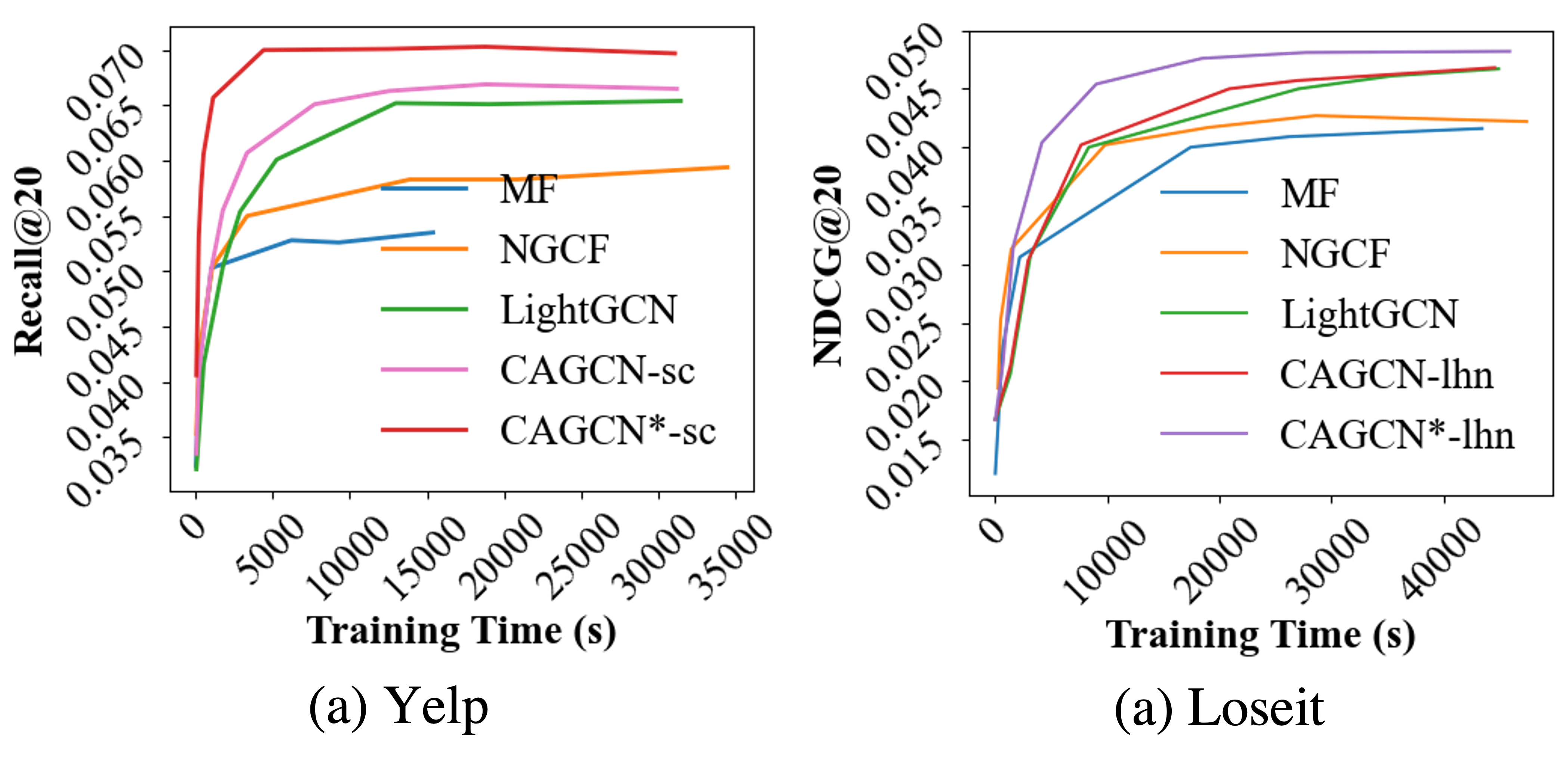}
     \vskip -2.25ex
     \vskip -1ex
     \caption{Training time (s) of different models.}
     \label{fig-train}
     \vskip -2ex
\end{figure}

Moreover, by combining two views of propagations, one from CAGC and one from LightGCN, CAGCN* achieves even higher performance with even less time. This is because keeping aggregating more information from neighbors with higher CIR (as CAGCN does) would prevent each user from aggregating information from his/her other neighbors. In addition, we report the first time that our best CAGCN* variant achieves the best performance of LightGCN on each dataset in Table~\ref{tab-efficiency2}. We also report the preprocessing time for pre-calculating the CIR matrix $\boldsymbol{\Phi}$ for our model to avoid any bias. We could see that even considering the preprocessing time, it still takes significantly less time for CAGCN* to achieve the same best performance as LightGCN, which highlights the broad prospects to deploy CAGCN* in real-world recommendations.

\begin{table}[t]
\scriptsize
\setlength{\extrarowheight}{.095pt}
\setlength\tabcolsep{3pt}
\caption{Average Rank-Biased Overlap (RBO) of the ranked neighbor lists between training (i.e., $\mathcal{N}_u^1)$ and testing/full (i.e., $\widehat{\mathcal{N}}_u^1$ and $\mathcal{N}_u^1\cup$ $\widehat{\mathcal{N}}_u^1$, respectively) dataset over all nodes $u \in \mathcal{U}$.}
\centering
\label{tab-simi}
\vskip -4ex
\begin{tabular}{l|cc|cc|cc}
\hline
 \multirow{2}{*}{Metric} & \multicolumn{2}{c|}{Gowalla} & \multicolumn{2}{c|}{Yelp} & \multicolumn{2}{c}{Ml-1M} \\
 & Train-Test & Train-Full & Train-Test & Train-Full & Train-Test & Train-Full \\
 \hline
JC & 0.604$\pm$0.129 & 0.902$\pm$0.084 & 0.636$\pm$0.124 & 0.897$\pm$0.081 & 0.848$\pm$0.092 & 0.978$\pm$0.019\\
SC & 0.611$\pm$0.127 & 0.896$\pm$0.084 & 0.657$\pm$0.124 & 0.900$\pm$0.077 & 0.876$\pm$0.077 & 0.983$\pm$0.015\\
LHN & 0.598$\pm$0.121 & 0.974$\pm$0.036 & 0.578$\pm$0.100 & 0.976$\pm$0.029 & 0.845$\pm$0.082 & 0.987$\pm$0.009\\
CN & 0.784$\pm$0.120 & 0.979$\pm$0.029 & 0.836$\pm$0.100 & 0.983$\pm$0.023 & 0.957$\pm$0.039 & 0.995$\pm$0.006 \\
\hline
\end{tabular}
\vskip -4ex
\end{table}

\subsection{Further Probe}
\subsubsection{Performance grouped by node degrees.} Here we group nodes by degree and visualize the average performance of each group. Comparing non-graph-based models (e.g., MF), graph-based models (e.g., LightGCN, CAGCN(*)) achieve higher performance for lower degree nodes $[0, 300)$ while lower performance for higher degree nodes $[300, \text{Inf})$. Since node degree follows the power-law distribution~\cite{stephen2009explaining}, the average performance of graph-based models is still higher than MF. On one hand, graph-based models leverage neighborhood to augment the weak supervision for low-degree nodes. On the other hand, they introduce noisy interactions for higher-degree nodes. It is also interesting to see the opposite performance trends under different evaluation metrics: NDCG prefers high-degree nodes while recall prefers low-degree nodes. This indicates that different evaluation metrics have different sensitivity to node degrees and an unbiased node-centric evaluator is desired.

\begin{figure}[htbp!]
     \centering
     \vskip -1ex
     \hspace{-2.1ex}
     \includegraphics[width=0.48\textwidth]{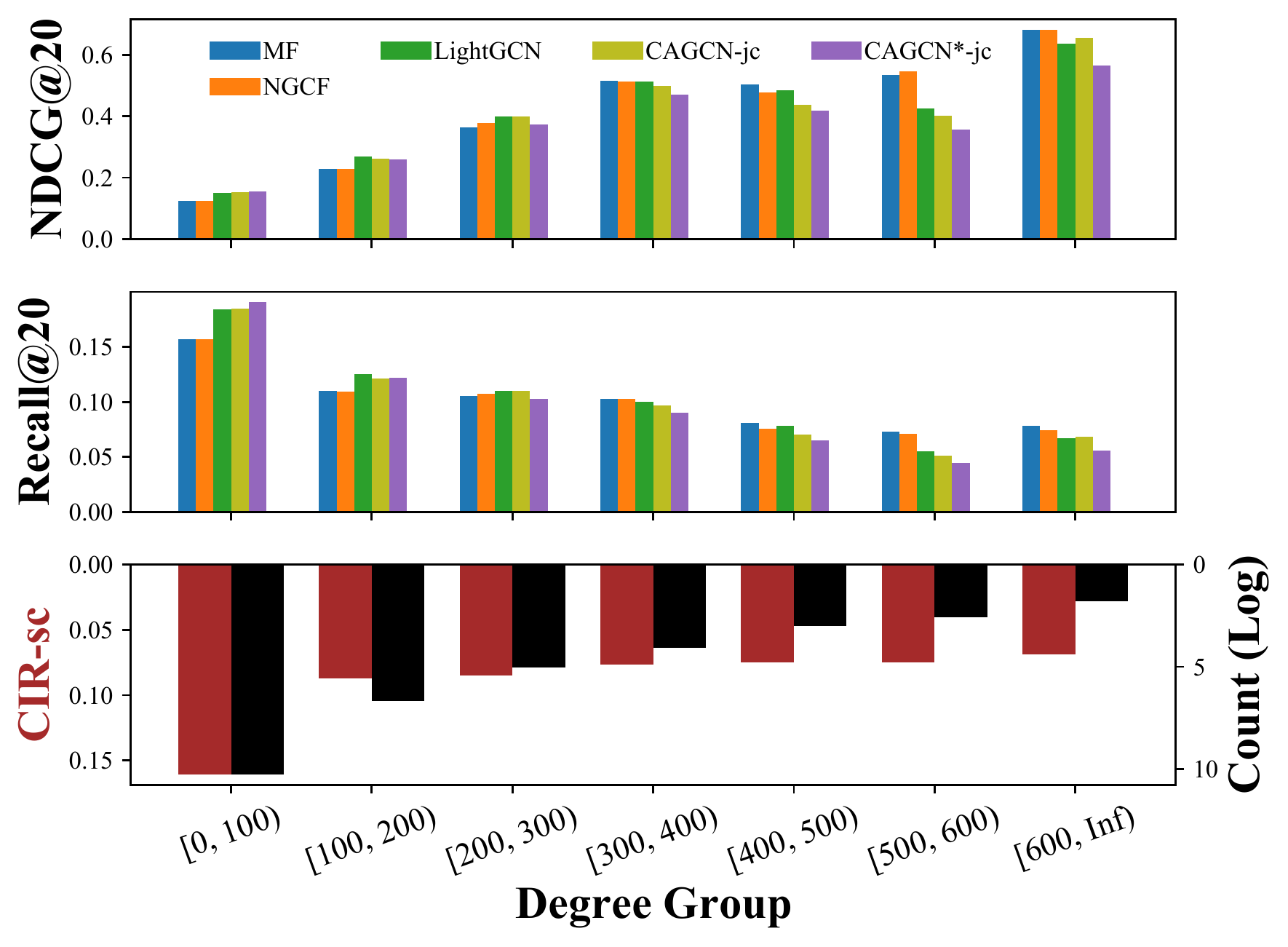}
     \vskip -2.25ex
     \caption{Performance w.r.t. node degree on Gowalla. A similar trend is seen on Yelp in {\textcolor{blue}{Supplementary}}~\ref{app-performinter}.}
     \label{fig-imbdeg}
     \vspace{-2.5ex}
\end{figure}

\begin{figure*}[htbp!]
    \centering
    \includegraphics[width=1\textwidth]{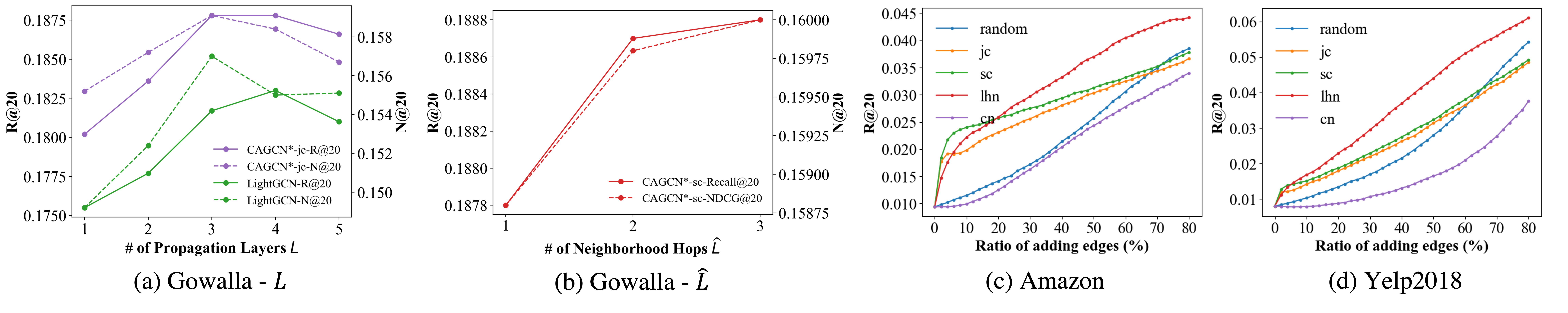}
    \vspace{-7ex}
    \caption{In (a)-(b), the performance first increases since we capture higher-layer neighborhood information and higher-hop topological interaction in calculating CIR as $L, \widehat{L}$ increase from 1 to 3. However, the performance decreases in (a) as $L$ increases due to over-smoothing. In (c)-(d), we add the global top edges directly (rather than cycle each node) according to their CIR. More details are provided in Appendix~\ref{app-res-construct2}.}
     \label{fig-analysis-empirically}
     \vskip -4ex
\end{figure*}

\vspace{-2ex}
\subsubsection{Impacts of propagation layers $L$ and neighborhood hops $\widehat{L}$.} Figure~\ref{fig-analysis-empirically}(a)-(b) visualize the performance of CAGCN* and LightGCN when the propagation layer $L$ in Eq.~\eqref{eq-mp} and the neighborhood hop $\widehat{L}$ in Eq.~\eqref{eq-phi-test} increase. In (a), the performance first increases as $L$ increases from 1 to 3 due to the incorporation of high-layer neighborhood information and then decreases due to over-smoothing. More importantly, our CAGCN* is always better than LightGCN at all propagation layers. In (b), the performance consistently increases as the number of neighborhood hops increases because we are allowed to consider even more higher topological interactions among each node's neighborhood in computing CIR.

\vspace{-1.75ex}
\subsubsection{Adding edges globally according to CIR.} Figure~\ref{fig-analysis-empirically}(c)-(d) visualize the performance change when we add edges randomly and according to CIR. Unlike Figure~\ref{fig-cir-retrain}-\ref{fig-cir-pretrain} where we add edges by cycling each node, here we directly select the global top edges regardless of each center node according to their CIR and then evaluate the LightGCN with the pre-trained user-item embeddings. In the first stage, we observe a similar trend that adding edges according to JC, SC, and LHN leads to faster performance gain. However, since we don't cycle over each node, we would keep adding so many edges with larger CIR to the same node, which fails to bring performance gain anymore and hence cannot maximize our performance benefit under the node-centric evaluation metric.

\vspace{-2ex}
\section{Related Work}\label{sec-relatedwork}
\textbf{Collaborative Filtering \& Recommendation.} Collaborative filtering (CF) predicts users' interests by utilizing the preferences of other users with similar interests~\cite{goldberg1992using}. Early CF methods used Matrix Factorization techniques~\cite{rendle2009bpr, koren2009matrix, bpr, tay2018latent} to capture CF effect via optimizing users/items' embeddings over historical interactions. Stepping further, Graph-based methods either leverage topological constraints or message-passing to inject the CF effect into user/item embeddings~\cite{lightgcn, ngcf}. ItemRank and BiRank~\cite{gori2007itemrank, he2016birank} perform label propagation and compute users' ranking based on structural proximity between the observed and the target items. To make user preferences learnable, HOP-Rec~\cite{yang2018hop} combines the graph-based method and the embedding-based method. Yet, interactions captured by random walks there do not fully explore the high-layer neighbors and multi-hop dependencies~\cite{tdgnn}. By contrast, GNN-based methods are superior at encoding higher-order structural proximity in user/item embeddings~\cite{ngcf, lightgcn}. Recent work~\cite{chen2021structured, GTN, tian2022learning} has demonstrated that not all captured collaborations improve users' ranking. \cite{chen2021structured} proposes to learn binary mask and impose low-rank regularization while ours propose novel topological metric CIR to weigh neighbors’ importance. \cite{GTN} smooths nodes’ embeddings based on degree-normalized embedding similarity, while ours adaptively smooth based on topological proximity(CIR).  \cite{tian2022learning} denoises interactions/preserve diversity based on 1-layer propagated embeddings and hence cannot go beyond 1-WL test, while ours keep neighbors and does not focus on diversity issues.

\noindent\textbf{Link Prediction.} As a generalized version of recommendation, link prediction finds applications in predicting drug interactions and completing knowledge graphs~\cite{rozemberczki2022chemicalx, nickel2015review}. Early studies adopt topological heuristics to score node pairs~\cite{leicht2006vertex, zhou2009predicting, newman2001clustering}. Furthermore, latent-based/deep-learning methods~\cite{perozzi2014deepwalk, zhang2017weisfeiler} are proposed to characterize underline topological patterns in node embeddings via random walks~\cite{grover2016node2vec} or regularizing~\cite{perozzi2014deepwalk}. To fully leverage node features, GNN-based methods are proposed and achieve unprecedented success owing to the use of the neural network to extract task-related information and the message-passing capture the topological pattern~\cite{seal, walkpool, counterfactual}. Recently, efforts have been invested in developing expressive GNNs that can go beyond the 1-WL test \cite{zhao2021stars, beyond, liu2022your} for node/graph classification. Following this line, our work develops a recommendation-tailored graph convolution with provably expressive power in predicting links between users and items.

\vspace{-2ex}
\section{Conclusion}\label{sec-conclusion}
In this paper, we find that the message-passing captures collaborative effect by leveraging interactions between neighborhoods. The strength of the captured collaborative effect depends the embedding similarity, the weight of paths and the contribution of each propagation layer. To determine whether the captured collaborative effect would benefit the prediction of user preferences, we propose the Common Interacted Ratio (CIR) and empirically verify that leveraging collaborations from neighbors with higher CIR contributes more to users' ranking. Furthermore, we propose CAGCN(*) to selectively aggregate neighboring nodes' information based on their CIRs. We further define a new type of isomorphism, bipartite-subgraph-isomorphism, and prove that our CAGCN* can be more expressive than 1-WL in distinguishing subtree(subgraph)-isomorphic yet non-bipartite-subgraph-isomorphic graphs. Experimental results demonstrate the advantages of the proposed CAGCN(*) over other baselines. Specifically, CAGCN* outperforms the most representative graph-based recommendation model, LightGCN~\cite{lightgcn}, by around 10\% in Recall@20 but also achieves roughly more than 80\% speedup. 
In the future, we will explore the imbalanced performance improvement among nodes in different degree groups as seen in Figure~\ref{fig-imbdeg}, especially from the perspective of GNN fairness~\cite{wang2022wsdm,wang2022improving}. 

\bibliographystyle{ACM-Reference-Format}
\bibliography{references}

\appendix
\newpage 
\section{Appendix}
\subsection{Graph Topological Metrics for CIR}\label{sec-toponote}
Here we demonstrate that by configuring different $f$ and $\widehat{L}$, $\phi_{u}^{\widehat{L}}(j)$ can express many existing graph similarity metrics.

\vspace{-3ex}
\begin{equation}
    \phi^{\widehat{L}}_u(j) = \frac{1}{|\mathcal{N}_u^1|}\sum_{i\in\mathcal{N}_u^1}\sum_{l = 1}^{\widehat{L}}\beta^{2l}\sum_{P_{ji}^{2l}\in\mathscr{P}_{ji}^{2l}}{\frac{1}{f(\{\mathcal{N}_k^1|k\in P_{ji}^{2l}\})}}
\end{equation}

\begin{itemize}[leftmargin=*]
\item \textbf{Jaccard Similarity (JC)~\cite{liben2007link}:} The JC score measures the similarity between neighborhood sets as the ratio of the intersection of two neighborhood sets to the union of these two sets: 
\vspace{-1ex}
\begin{equation}
\small
    \text{JC}(i, j)=\frac{|\mathcal{N}^1_i\cap\mathcal{N}^1_j|}{|\mathcal{N}^1_i\cup\mathcal{N}^1_j|}
\end{equation}

Let $\widehat{L} = 1$ and set $f(\{\mathcal{N}_k^1|k\in P_{ji}^{2}\}) = |\mathcal{N}_i^1 \cup \mathcal{N}_j^1|$, then we have:
\begin{equation}
\tiny
    \phi^1_u(j) = \frac{1}{|\mathcal{N}_u^1|}\sum_{i\in\mathcal{N}_u^1}\beta^{2}\sum_{P_{ji}^{2}\in\mathscr{P}_{ji}^{2}}\frac{1}{|\mathcal{N}_i^1 \cup \mathcal{N}_j^1|} = \frac{\beta^{2}}{|\mathcal{N}_u^1|}\sum_{i\in\mathcal{N}_u^1}\frac{|\mathcal{N}_i^1 \cap \mathcal{N}_j^1|}{|\mathcal{N}_i^1 \cup \mathcal{N}_j^1|} = \frac{\beta^{2}}{|\mathcal{N}_u^1|}\sum_{i\in\mathcal{N}_u^1}\text{JC}(i, j)
\end{equation}

\item \textbf{Salton Cosine Similarity (SC)~\cite{salton1989automatic}:} The SC score measures the cosine similarity between the neighborhood sets of two nodes: 
\begin{equation}
\small
    \text{SC}(i, j)=\frac{|\mathcal{N}^1_i\cap\mathcal{N}^1_j|}{\sqrt{|\mathcal{N}^1_i\cup\mathcal{N}^1_j|}}
\end{equation}

let $\widehat{L} = 1$ and set $f(\{\mathcal{N}_k^1|k\in P_{ji}^{2}\}) = \sqrt{|\mathcal{N}_i^1 \cup \mathcal{N}_j^1|}$, then we have:
\begin{equation}
\tiny
    \phi^1_u(j) = \frac{1}{|\mathcal{N}_u^1|}\sum_{i\in\mathcal{N}_u^1}\beta^{2}\sum_{P_{ji}^{2}\in\mathscr{P}_{ji}^{2}}{\frac{1}{\sqrt{|\mathcal{N}_i^1 \cup \mathcal{N}_j^1|}}} = \frac{\beta^{2}}{|\mathcal{N}_u^1|}\sum_{i \in \mathcal{N}_u^1}\frac{|\mathcal{N}_i^1 \cap \mathcal{N}_j^1|}{\sqrt{|\mathcal{N}_i^1 \cup \mathcal{N}_j^1|}} = \frac{\beta^{2}}{|\mathcal{N}_u^1|}\sum_{i\in\mathcal{N}_u^1}\text{SC}(i, j)
\end{equation}
\vspace{-3ex}
\item \textbf{Common Neighbors (CN)~\cite{newman2001clustering}:} The CN score measures the number of common neighbors of two nodes and is frequently used for measuring the proximity between two nodes: 
\begin{equation}
\small
    \text{CN}(i, j)= |\mathcal{N}^1_i\cap\mathcal{N}^1_j|
\end{equation}

Let $\widehat{L} = 1$ and set $f(\{\mathcal{N}_k^1|k\in P_{ji}^{2}\}) = 1$, then we have:
\begin{equation}
\tiny
    \phi^1_u(j) = \frac{1}{|\mathcal{N}_u^1|}\sum_{i\in\mathcal{N}_u^1}\beta^{2}\sum_{P_{ji}^{2}\in\mathscr{P}_{ji}^{2}}1 = \frac{\beta^{2}}{|\mathcal{N}_u^1|}\sum_{i \in \mathcal{N}_u^1}|\mathcal{N}_i^1 \cap \mathcal{N}_j^1| = \frac{\beta^{2}}{|\mathcal{N}_u^1|}\sum_{i\in\mathcal{N}_u^1}\text{CN}(i, j)
\end{equation}

Since CN does not contain any normalization to remove the bias of degree in quantifying proximity and hence performs worse than other metrics as demonstrated by our recommendation experiments in Table~\ref{tab-resfull}.

\item \textbf{Leicht-Holme-Nerman (LHN)~\cite{leicht2006vertex}:} LHN is very similar to SC. However, it removes the square root in the denominator and is more sensitive to the degree of node: 
\vspace{-1ex}
\begin{equation}
\small
    \text{LHN}(i, j) = \frac{|\mathcal{N}_i^1\cap\mathcal{N}_j^1|}{|\mathcal{N}_i^1|\cdot|\mathcal{N}_j^1|}
\end{equation}

Let $\widehat{L} = 1$ and set $f(\{\mathcal{N}_k^1|k\in P_{ji}^{2}\}) = |\mathcal{N}_i^1|\cdot|\mathcal{N}_j^1|$, then we have:
\begin{equation}
\tiny
    \phi^1_u(j) = \frac{1}{|\mathcal{N}_u^1|}\sum_{i\in \mathcal{N}_u^1}\beta^{2}\sum_{P_{ji}^{2}\in\mathscr{P}_{ji}^{2}}\frac{1}{|\mathcal{N}_i^1|\cdot|\mathcal{N}_j^1|} = \frac{\beta^{2}}{|\mathcal{N}_u^1|}\sum_{i\in\mathcal{N}_u^1}\frac{|\mathcal{N}_i^1 \cap \mathcal{N}_j^1|}{|\mathcal{N}_i^1|\cdot|\mathcal{N}_j^1|} = \frac{\beta^{2}}{|\mathcal{N}_u^1|}\sum_{i\in\mathcal{N}_u^1}\text{LHN}(i, j)
\end{equation}

We further emphasize that our proposed CIR is a generalized version of these four existing metrics and can be delicately designed toward satisfying downstream tasks and datasets. We leave such exploration on the choice of $f$ as one potential future work.
\end{itemize}

\subsection{Derivation of Eq.~\eqref{eq-collaborative}}\label{sec-proof1}
The matrix form of computing the ranking of user $u$ over item $i$ after $L$-layer LightGCN-based message-passing:

\begin{equation}\label{eq-dot}
    y_{ui}^{L} = (\sum_{l_1= 0}^{L}{\beta_{l_1}\mathbf{E}_u^{l_1}})^{\top}(\sum_{l_1= 0}^{L}{\beta_{l_1}\mathbf{E}_i^{l_1}}) = (\sum_{l_1 = 0}^L\beta_{l_1}\mathbf{A}^{l_1}\mathbf{E}^0)_{u}^{\top}(\sum_{l_1 = 0}^L\beta_{l_1}\mathbf{A}^{l_1}\mathbf{E}^0)_{i}.
\end{equation}
where $\beta_{l_1}$ is the layer contribution and LightGCN uses mean-pooling, i.e., $\frac{1}{L}$ in Eq.~\eqref{eq-mp}. For the propagated embedding at a specific layer $l_1$, we have:
\begin{equation}\label{eq-prop1}
    \mathbf{E}^{l_1}_u = (\mathbf{A}^{l_1}\mathbf{E}^0)_u = \sum\limits_{j\in\mathcal{V}_u^{l_1}}\alpha_{ju}^{l_1}\mathbf{e}_j^0,
\end{equation}
where $\alpha_{ju}^{l_1} = \sum_{P_{ju}^{l_1}\in\mathscr{P}_{ju}^{l_1}}\prod_{e_{pq}\in P_{ju}^{l_1}}{d^{-0.5}_{p}d^{-0.5}_{q}}$($\alpha_{ju}^{l_1} = 0$ if $\mathscr{P}_{ju}^{l_1}=\emptyset$). $\mathcal{V}_u^{l_1}$ is the set of all nodes having paths of length $l_1$ to $u$ and can be expressed as:
\begin{equation}\label{eq-prop2}
    \mathcal{V}_u^{l_1} = \bigcup_{l_2 = 0}^{l_1}\mathcal{N}_u^{l_2}\cdot\mathds{1}[(l_1 - l_2)\%2 = 0],
\end{equation}
where
\begin{equation}\label{eq-prop3}
    \mathcal{N}^{l_2}_u\cdot\mathds{1}[(l_1 - l_2)\%2 = 0] = \begin{cases}
    \mathcal{N}^{l_2}_u, & (l_1 - l_2)\%2 = 0\\
    \emptyset, & (l_1 - l_2)\%2 \ne 0
    \end{cases}.
\end{equation}

Substituting Eq.~\eqref{eq-prop2} into Eq.~\eqref{eq-prop1}, we have:
\begin{equation}\label{eq-prop4}
\tiny
    \mathbf{E}^{l_1}_u = (\mathbf{A}^{l_1}\mathbf{E}^0)_u = \hspace{-1.5ex}\sum\limits_{j\in\mathcal{V}_u^{l_1}}\alpha_{ju}^{l_1}\mathbf{e}_j^0 = \hspace{-3.5ex}\sum\limits_{j\in\bigcup_{l_2 = 0}^{l_1}\mathcal{N}^{l_2}_u\cdot\mathds{1}[(l_1 - l_2)\%2 = 0]} \hspace{-3.5ex} \alpha_{ju}^{l_1}  \mathbf{e}_j^0 = \sum_{l_2 = 0}^{l_1}\sum_{j \in \mathcal{N}_u^{l_2} \cdot \mathds{1}[(l_1 - l_2)\%2 = 0]} \hspace{-3.5ex}{\alpha_{ju}^{l_1}\mathbf{e}_j^0}.
\end{equation}
Then the aggregation of all $L$ layers' embeddings of user $u$ is expressed as:
\begin{equation}\label{eq-prop5}
    \sum_{l_1= 0}^{L}{\beta_{l_1}\mathbf{E}_u^{l_1}} = \sum_{l_1 = 0}^{L}\beta_{l_1}\sum_{l_2 = 0}^{l_1}\sum_{j \in \mathcal{N}_u^{l_2} \cdot \mathds{1}[(l_1 - l_2)\%2]}{\alpha_{ju}^{l_1}\mathbf{e}_j^0}.
\end{equation}
Eq.~\eqref{eq-prop5} means that for each length $l_1\in \{0, 1, ..., L\}$, for each node $j\in\mathcal{V}_u^{l_1}$ that has path of length $l_1$ to $u$, we propagate its embedding over each path $P_{ju}^{l_1}\in \mathscr{P}_{ju}^{l_1}$ with the corresponding weight coefficient $\prod_{e_{pq}\in P_{ju}^{l_1}}{d^{-0.5}_{p}d^{-0.5}_{q}}$.

Since nodes that are $l_1$-hops away from $u$ cannot have paths of length less than $l_1$, we reorganize Eq.~\eqref{eq-prop5} by first considering the hop of each node and then considering the length of each path, which leads to:
\begin{equation}\label{eq-prop6}
\footnotesize
    \sum_{l_1= 0}^{L}{\beta_{l_1}\mathbf{E}_u^{l_1}} = \sum_{l_1 = 0}^{L}\beta_{l_1}\sum_{l_2 = 0}^{l_1}\sum_{j \in \mathcal{N}^{l_2}_u \cdot \mathds{1}[(l_1 - l_2)\%2]}{\alpha_{ju}^{l_2}\mathbf{e}_j^0} = \sum_{l_1 = 0}^{L}\sum_{j\in\mathcal{N}_u^{l_1}}\sum_{l_2 = l_1}^{L}\beta_{l_2}\alpha_{ju}^{l_2}\mathbf{e}_j^0,
\end{equation}
where $\alpha_{ju}^{l_2} = \sum_{P_{ju}^{l_2}\in\mathscr{P}_{ju}^{l_2}}\prod_{e_{pq}\in P_{ju}^{l_2}}{d^{-0.5}_{p}d^{-0.5}_{q}}$($\alpha_{ju}^{l_2} = 0$ if $\mathscr{P}_{ju}^{l_2}=\emptyset$). Then by substituting Eq.~\eqref{eq-prop6} into Eq.~\eqref{eq-dot}, we end up with:

\begin{equation}\label{eq-prop7}
y_{ui}^L = (\sum_{l_1=0}^{L}\sum_{j\in\mathcal{N}^{l_1}_{u}}\sum_{l_2 = l_1}^{L}\beta_{l_2}\alpha_{ju}^{l_2}\mathbf{e}_{j}^0)^{\top}(\sum_{l_1=0}^{L}\sum_{v\in\mathcal{N}^{l_1}_{i}}\sum_{l_2 = l_1}^{L}\beta_{l_2}\alpha_{vi}^{l_2}\mathbf{e}_{v}^0),
\end{equation}
\noindent where $\mathcal{N}^{0}_u = \{u\}$ and specifically, $\alpha_{uu}^{0} = 1$. $\beta_{l_2}$ is the weight measuring contributions of propagated embeddings at layer $l_2$.

\subsection{Complexity Comparison and Analysis}\label{sec-complexity}
Let $|\mathcal{V}|, |\mathcal{E}|, |\mathcal{F}|$ be the total number of nodes, edges, and feature dimensions (assuming feature dimensions stay the same across all feature transformation layers). Let $L$ be the propagation layer for all graph-based models using message-passing. Let $r$ be the total number of negative samples per epoch per positive pair and $K$ be the number of $2^{\text{nd}}$-order neighbors. For $r$, all baselines use 1 per epoch per positive pair and hence can be omitted (aside from UltraGCN using a larger number). Then the complexity of each model is summarized in Table~\ref{tab-complexity}. For CAGCN, since we only consider $2$-hops away connections to compute CIR in Eq.~\eqref{eq-phi-test}, the main computational load would be computing the power of adjacency matrix, which takes $\mathcal{O}(|\mathcal{V}|^3)$. 
Note that for both of our CAGCN and UltraGCN, we can apply Strassens's Algorithm to further reduce the $\mathcal{O}(|\mathcal{V}|^3)$ to $\mathcal{O}(|\mathcal{V}|^{2.8})$. In Table~\ref{tab-efficiency2} in Section~\ref{sec-effcompare}, we report the preprocessing time for each dataset. Clearly, compared with the time used for training, the time for preprocessing is minor, which even demonstrates the superior efficiency of CAGCN since it significantly speeds up the training as justified in Section~\ref{sec-effcompare}.

\vspace{-3ex}
\begin{table}[htbp!]
\centering
\scriptsize 
\setlength{\extrarowheight}{0pt}
\setlength\tabcolsep{2pt}
\caption{Complexity of the pre-procession and the forward pass of CAGCN and different baselines.}
\vspace{-3ex}
\label{tab-complexity}
\begin{tabular}{llccc}
\hline
\multicolumn{2}{c}{Model} & MF & NGCF & LightGCN \\
\hline
\multicolumn{2}{l}{ 
\begin{tabular}{@{}c@{}} \# Extra Hyper-parameters \end{tabular}
} & / & / & 1 \\
\hline
\multirow{2}{*}{Preprocess} & Space & / & $\mathcal{O}(|\mathcal{E}| + |\mathcal{V}|)$ & $\mathcal{O}(|\mathcal{E}| + |\mathcal{V}|)$ \\
 & Time & / & $\mathcal{O}(|\mathcal{E}| + |\mathcal{V}|)$ & $\mathcal{O}(|\mathcal{E}| + |\mathcal{V}|)$\\
 \hline
\multirow{2}{*}{Training} & Space & $\mathcal{O}(|\mathcal{V}|F)$ & $\mathcal{O}(L|\mathcal{V}|F + |\mathcal{E}| + LF^2)$ & $\mathcal{O}(L|\mathcal{V}|F + |\mathcal{E}|)$ \\
 & Time & $\mathcal{O}(|\mathcal{E}|F)$ & $\mathcal{O}(L(|\mathcal{E}|F + |\mathcal{V}|F^2))$ & $\mathcal{O}(L|\mathcal{E}|F + L|\mathcal{V}|F)$\\
 \hline
\end{tabular}

\begin{tabular}{llccc}
\hline
\multicolumn{2}{c}{Model} & GTN & UltraGCN & CAGCN \\
\hline
\multicolumn{2}{l}{ 
\begin{tabular}{@{}c@{}} \# Extra Hyper-parameters \end{tabular}
} & 1 & 7 & 2 \\
\hline
\multirow{2}{*}{Preprocess} & Space & $\mathcal{O}(|\mathcal{E}| + |\mathcal{V}|)$ & $\mathcal{O}(|\mathcal{E}| + |\mathcal{V}|)$ & $\mathcal{O}(|\mathcal{E}| + |\mathcal{V}|)$ \\
 & Time & $\mathcal{O}(|\mathcal{E}| + |\mathcal{V}|)$ & $\mathcal{O}(|\mathcal{V}|^3)$ & $\mathcal{O}(|\mathcal{V}|^3)$ \\
 \hline
\multirow{2}{*}{Training} & Space  & $\mathcal{O}(L|\mathcal{V}|F + |\mathcal{E}|)$ & $\mathcal{O}(|\mathcal{V}|F + |\mathcal{V}|K)$ & $\mathcal{O}(L|\mathcal{V}|F + |\mathcal{E}|)$ \\
 & Time  & $\mathcal{O}(L|\mathcal{E}|F + L|\mathcal{V}|F)$  & $\mathcal{O}(r(|\mathcal{E}| + |V|K)F)$ & $\mathcal{O}(L|\mathcal{E}|F + L|\mathcal{V}|F)$\\
 \hline
\end{tabular}
\vspace{-2ex}
\end{table}

\vspace{-4ex}
\subsection{Experimental Setting}
\subsubsection{Baselines}\label{app-baseline}
We compare our proposed CAGCN(*) with the following baselines: \textbf{MF~\cite{bpr}:} Most classic collaborative filtering method equipped with the BPR loss; \textbf{NGCF~\cite{ngcf}:} The first GNN-based collaborative filtering model; 
\textbf{LightGCN~\cite{lightgcn}:} The most popular GNN-based collaborative filtering model, which removes feature transformation and nonlinear activation; \textbf{UltraGCN~\cite{ultragcn}:} The first model approximating regularization weights by infinite layers of message passing, and leveraging higher-order user-user relationships; 
\textbf{GTN~\cite{GTN}:} This model leverages a robust and adaptive propagation based on the trend of the aggregated messages to avoid unreliable user-item interactions.

\vspace{-1ex}
\subsubsection{CAGCN(*)-variants}\label{app-variant}
For CAGCN, 
$\gamma_i = \sum_{r\in\mathcal{N}_i^1}d_i^{-0.5}d_r^{-0.5}$ to ensure that the total edge weights for messages received by each node are the same as LightGCN. Therefore, Eq.~\eqref{eq-CAGCNaug} becomes:
\vspace{-1ex}
\begin{equation}\label{eq-CAGCNaug1}
    \mathbf{e}_i^{l+1} = \sum_{j\in\mathcal{N}_i^1}((\sum_{r\in\mathcal{N}_i^1}d_i^{-0.5}d_r^{-0.5})\frac{\boldsymbol{\Phi}_{ij}}{\sum_{k\in\mathcal{N}_i^1}{\boldsymbol{\Phi}_{ik}}})\mathbf{e}_j^l, \forall i\in\mathcal{V}.
\end{equation}

\noindent For CAGCN*, 
$\gamma_i = \gamma$ as a constant controlling the trade-off between contributions from message-passing according to LightGCN and according to CAGC. Eq.~\eqref{eq-CAGCNaug} becomes:
\vspace{-1ex}
\begin{equation}\label{eq-CAGCNaug2}
    \mathbf{e}_i^{l+1} = \sum_{j\in\mathcal{N}_i^1}(\gamma\frac{\boldsymbol{\Phi}_{ij}}{\sum_{k\in\mathcal{N}_i^1}{\boldsymbol{\Phi}_{ik}}} + d_{i}^{-0.5}d_{j}^{-0.5})\mathbf{e}_j^l, \forall i\in\mathcal{V},
\end{equation}
where we search $\gamma$ in $\{1, 1.2, 1.5, 1.7, 2.0\}$.

\vspace{-1ex}
\subsection{Additional Experiments}

\subsubsection{Adding edges according to local CIRs}\label{app-res-construct}
Given the user-item bipartite graph for training, we calculate the CIR-variants and use them to rank the neighborhood for each center node. During construction, we first remove all edges and then iteratively cycle over each node and add its corresponding neighbor based on the ranking until hitting the budget. Figure~\ref{fig-app_verify}(a) contains an example with 
users $u_1, u_2$ and a budget of three edges, where $u_1$ and $u_2$ both first get an edge, but then only $u_1$ gets a second edge. 

\begin{figure}[htbp!]
     \centering
     \vskip -2ex
     \hspace{-1ex}\includegraphics[width=0.48\textwidth]{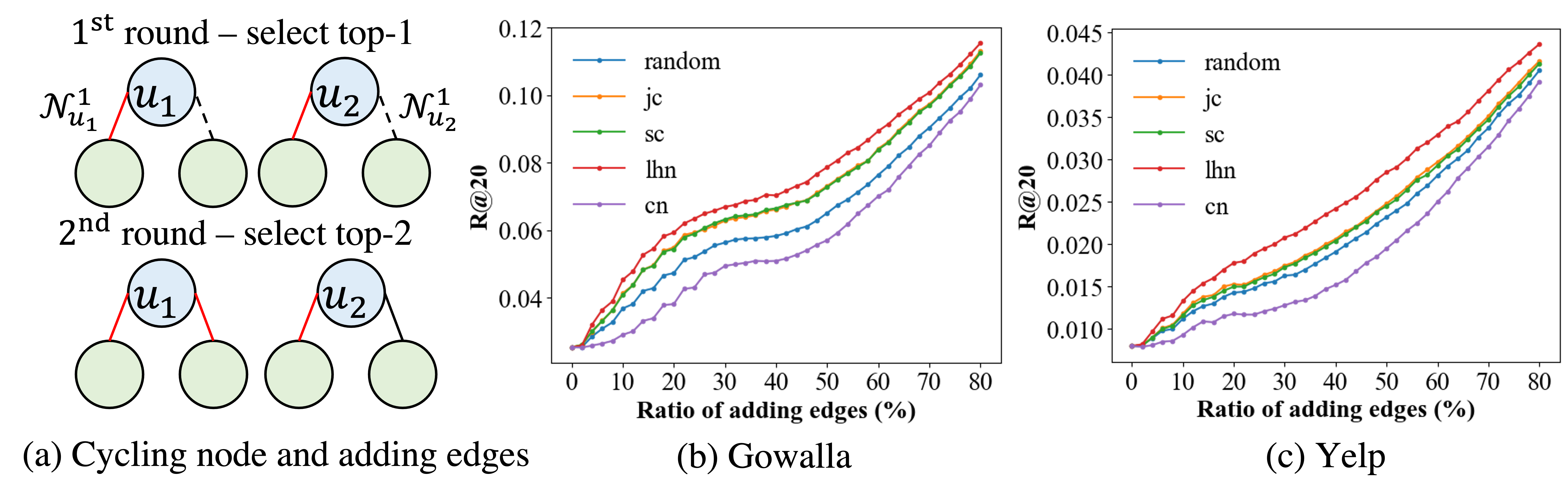}
     \vskip -3.5ex
     \caption{(a) The procedure of adding edges according to CIR of neighbors around each node. (b)-(c) The performance change of adding edges on Gowalla and Yelp.}
     \vskip -2.5ex
     \label{fig-app_verify}
\end{figure}

Similar to what we observed in Figure~\ref{fig-cir-pretrain}, the performance increases as we add more edges on Gowalla and Yelp (Figure~\ref{fig-app_verify}(b) and (c), respectively). Furthermore, except for cn, adding edges according to CIR-variants is more effective in increasing the performance, which demonstrates the effectiveness of CIR in measuring the edge importance.

\subsubsection{Adding edges according to global CIRs}\label{app-res-construct2}
Here we introduce how we add edges globally according to CIRs. Given the user-item interactions for training, we first construct the user-item bipartite graph and calculate the different variants of CIR including jc, sc, cn, lhn as stated in Appendix~\ref{sec-toponote}. Then, we directly rank all edges according to the computed CIR. In the construction stage, we first remove all edges in the bipartite graph. Then we select the top edges according to the ranking based on our budget. 
Figure~\ref{fig-app_verify2}(a) contains an example with users $u_1, u_2$ and a budget of three edges, where we directly select the top-3 edges from all users' neighbors.

\begin{figure}[htbp!]
     \centering
     \vskip -2ex
     \hspace{-1ex}\includegraphics[width=0.48\textwidth]{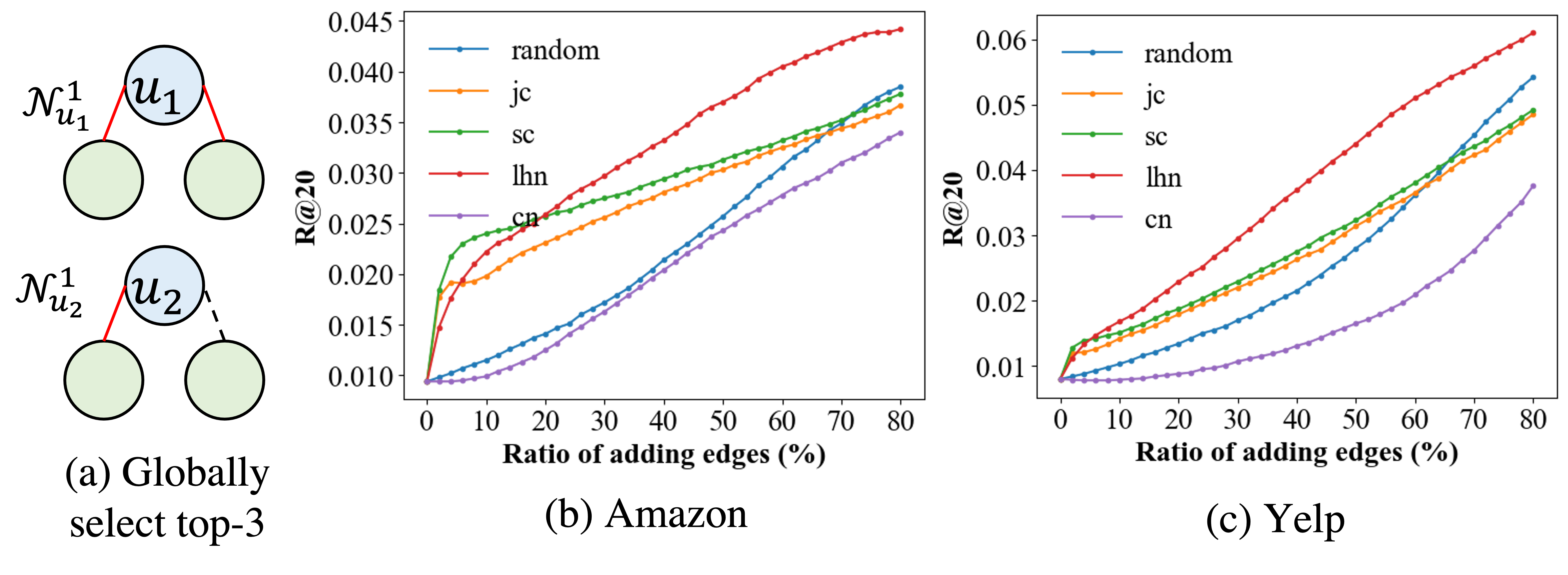}
     \vskip -3.5ex
     \caption{(a) The procedure of adding edges according to CIR globally. (b)-(c) The performance change of adding edges on Amazon and Yelp.}
    \vskip -2ex
     \label{fig-app_verify2}
\end{figure}

In the first stage, we observe a similar trend that adding edges according to CIRs lead to faster performance gain as Figure~\ref{fig-app_verify}, which demonstrate the effectiveness of CIR in measuring the edge importance globally. However, since we don't cycle over each node and add its corresponding edge as we do in Appendix~\ref{app-res-construct}, we would keep adding so many edges with larger CIR to the same node, which may not maximize our performance benefit when the metric is calculated by averaging over all nodes.

\newpage
\section{Supplementary}\label{sec-supp}

\subsection{Hyperparamters}
We follow the procedure of hyperparameter tuning in~\cite{lightgcn, ngcf} and list the hyperparameters as follows:
\vspace{-1ex}
\begin{itemize}[leftmargin=*]
    \item \textbf{LightGCN.} Propagation layers: $L = 3$; Pooling layer: Meaning pooling; 
    
    \item \textbf{NGCF.} Propagation layers: $L=3$; Slope of LeakyRelu: $0.2$; Pooling layer: Concatenation

    \item \textbf{UltraGCN.} For Gowalla, Yelp, Amazon and Ml-1M, we use exactly the same hyperparameter configurations provided \href{https://github.com/xue-pai/UltraGCN}{here}. For Loseit and News, the hyperparamters are as follows:
    \newline\indent (1) Loseit: Training epochs 2000; Learning rate $1e^{-3}$; batch size 512; Loss weights $w_1 = 1e^{-6}, w_2=1, w_3=1e^{-6}, w_4=1$; the number of negative samples per epoch per positive pair $20$; negative weight $20$; weight of $l_2$ regularization $\gamma=1e^{-4}$, $2^{\text{nd}}$-constraining loss coefficient $\lambda=5e^{-4}$.
    \newline\indent (2) News: Training epochs 2000; Learning rate $1e-3$; batch size 1024; Loss weights $w_1 = 1e^{-8}, w_2=1, w_3=1, w_4=1e^{-8}$; the number of negative samples per epoch per positive pair $1000$; negative weight $200$; weight of $l_2$ regularization $\gamma=1e^{-4}$, $2^{\text{nd}}$-constraining loss coefficient $\lambda=5e^{-4}$.

    \item \textbf{GTN.} For Gowalla, Yelp, Amazon, we directly report the result provided \href{https://github.com/wenqifan03/gtn-sigir2022}{here}.
    In the following, we introduce the hyparamemeters we used for our CAGCN(*)-variants. With specification, the number of training epochs is set to be 1000; the learning rate 0.001; $l_2$ regularization $1e^{-4}$; number of negative samples $1$; embedding dimenstion $64$; batch size $256$; $\widehat{L} = 1$.
    \item \textbf{CAGCN-jc.} (1) Gowalla: $\gamma=1$; (2) Yelp: $\gamma=1.2$; (3) Amazon: $\gamma=1$; (4) Ml-1M: $\gamma=2$; (5) Loseit: $\gamma=1$; (6) News: $\gamma=1$.

    \item \textbf{CAGCN-cn.} (1) Gowalla: $\gamma=1$; (2) Yelp: $\gamma=1.2$; (3) Amazon: $\gamma=1$; (4) Ml-1M: $\gamma=1$; (5) Loseit: $\gamma=1$; (6) News: $\gamma=1$.

    \item \textbf{CAGCN-sc.} (1) Gowalla: $\gamma=1$; (2) Yelp: $\gamma=1$; (3) Amazon: $\gamma=1$; (4) Ml-1M: $\gamma=2$; (5) Loseit: $\gamma=1$; (6) News: $\gamma=1$.

    \item \textbf{CAGCN-lhn.} (1) Gowalla: $\gamma=1.2$; (2) Yelp: $\gamma=1$; (3) Amazon: $\gamma=1$; (4) Ml-1M: $\gamma=2$; (5) Loseit: $\gamma=1, L=1$; (6) News: $\gamma=1.5$.
    
    \item \textbf{CAGCN*-jc.} (1) Gowalla: $\gamma=1.2$, $l_2$-regularization $1e-3$; (2) Yelp: $\gamma=1.7$, $l_2$-regularization $1e^{-3}$; (3) Amazon: $\gamma=1.7$, $l_2$-regularization $1e^{-3}$; (4) Ml-1M: $\gamma=1$, $l_2$-regularization $1e^{-3}$; (5) Loseit: $\gamma=1, L=2$; (6) News: $\gamma=1, L=2$.

    \item \textbf{CAGCN*-sc.} (1) Gowalla: $\gamma=1.2$, $l_2$-regularization $1e^{-3}$; (2) Yelp: $\gamma=1.7$, $l_2$-regularization $1e^{-3}$; (3) Amazon: $\gamma=1.7$, $l_2$-regularization $1e^{-3}$; (4) Ml-1M: $\gamma=1$, $l_2$-regularization $1e^{-3}$; (5) Loseit: $\gamma=1, L=2$; (6) News: $\gamma=1, L=2$.

    \item \textbf{CAGCN*-lhn.} (1) Gowalla: $\gamma=1$, $l_2$-regularization $1e^{-3}$; (2) Yelp: $\gamma=1$, $l_2$-regularization $1e^{-3}$; (3) Amazon: $\gamma=1.5$, $l_2$-regularization $1e^{-3}$; (4) Ml-1M: $\gamma=1$, $l_2$-regularization $1e^{-3}$; (5) Loseit: $\gamma=0.5, L=2$; (6) News: $\gamma=1, L=2$.

\end{itemize}

\vspace{-4.5ex}
\subsection{Performance Interpretation}\label{app-performinter}
To demonstrate the generality of our observation in Figure~\ref{fig-imbdeg}, we further perform exactly the same analysis on Yelp (shown in Figure~\ref{fig-imbdegyelp}) and derive almost the same insights: 1) Graph-based recommendation models achieve higher performance than non-graph-based ones for lower degree nodes; 2) the opposite performance trends between NDCG and Recall indicates that different evaluation metrics have different levels of sensitivity to node degrees.

\begin{figure}[htbp!]
     \centering
     \includegraphics[width=0.5\textwidth]{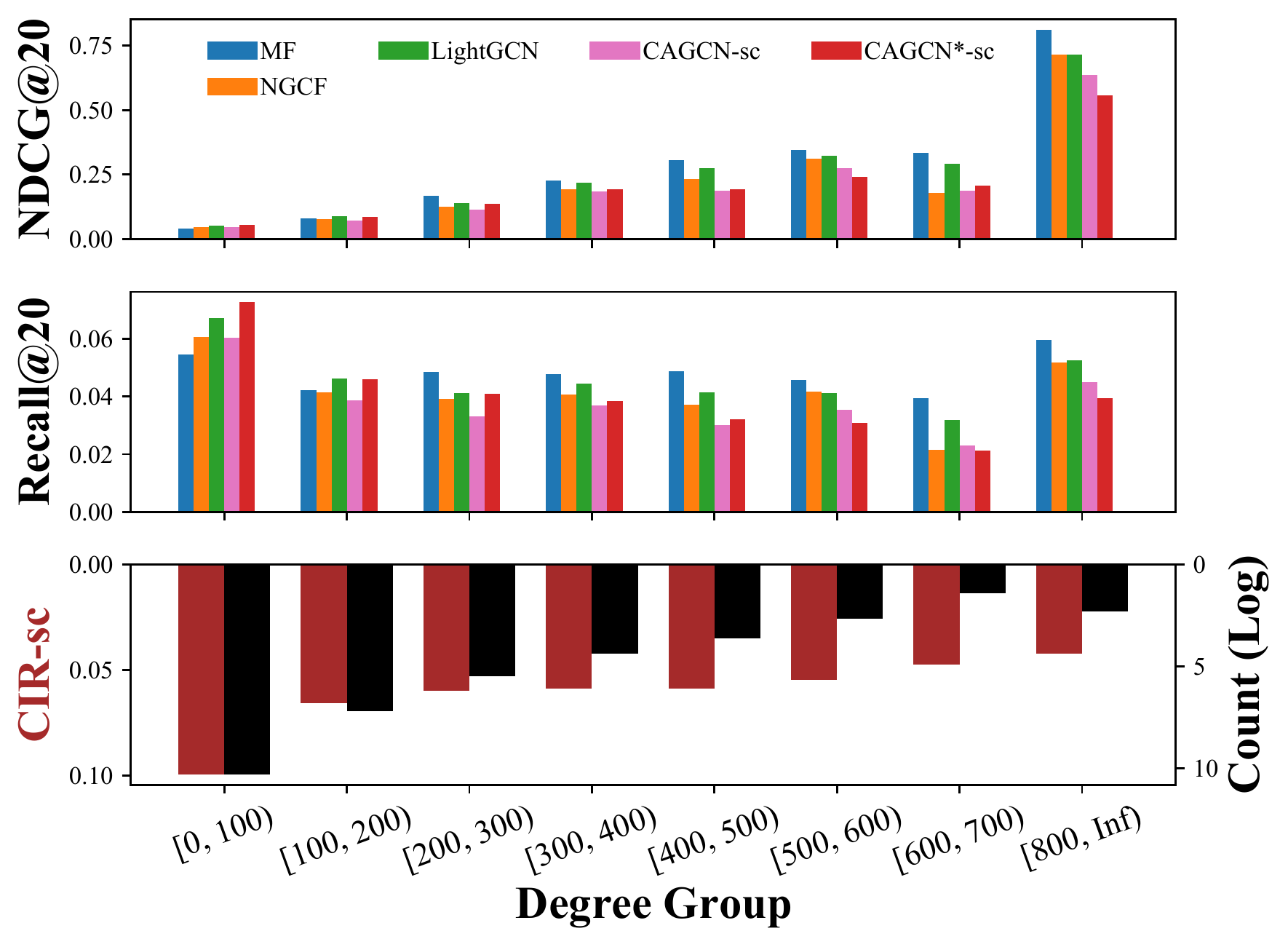}
     \vskip -2.5ex
     \caption{Performance of model w.r.t. node degree on Yelp.}
     \label{fig-imbdegyelp}
     \vspace{-2ex}
\end{figure}

\subsection{Thorough Complexity Analysis}\label{app-complexity2}
Generally compared with the very basic MF, the main computational issue of LightGCN comes from the message-passing which takes $\mathcal{O}(L|\mathcal{E}|F)$ time and $\mathcal{O}(L|\mathcal{V}|F)$ space to save the intermediate node representations. For NGCF, the extra complexity comes from the nonlinear transformation, which takes $\mathcal{O}(L|\mathcal{V}|F^2)$ time and $\mathcal{O}(LF^2)$ space to save the transformation weights. For UltraGCN, the main bottleneck comes from computing the user-user connections, which involves the power of adjacency matrix and hence $\mathcal{O}(|\mathcal{V}|^3)$. Furthermore, as it samples hundreds of negative samples and the optimization is also performed on the user-user connections, then its time complexity would be $\mathcal{O}(r(|\mathcal{E}| + |\mathcal{V}|K)F)$. For CAGCN, since we only consider $2$-hops away connections to compute CIR in Eq.~\eqref{eq-phi-test}(essentially for each center node, we count the number of paths of length 2 from each of its neighbors to its whole neighborhood), the main computational load would be computing the power of adjacency matrix, which takes $\mathcal{O}(|\mathcal{V}|^3)$. Note that for both of our CAGCN and UltraGCN, we can apply Strassens's Algorithm to further reduce the $\mathcal{O}(|\mathcal{V}|^3)$ to $\mathcal{O}(|\mathcal{V}|^{2.8})$ for computing the power of adjacency matrix.

\subsection{Graph Isomorphism}\label{app-graphiso}
We review the concepts of subtree/subgraph-isomorphism~\cite{beyond}.

\begin{definition}
\textbf{Subtree-isomporphism:} $\mathcal{S}_u$ and $\mathcal{S}_i$ are subtree-isomorphic, denoted as $\mathcal{S}_u\cong_{subtree} \mathcal{S}_i$, if there exists a bijective mapping $h: \widetilde{\mathcal{N}}^1_{u} \rightarrow \widetilde{\mathcal{N}}^1_{i}$ such that $h(u) = i$ and $\forall v\in\widetilde{\mathcal{N}}_u^1, h(v)=j, \mathbf{e}_v^l=\mathbf{e}_j^l$.
\end{definition}

\begin{definition}
\textbf{Subgraph-isomporphism:} $\mathcal{S}_u$ and $\mathcal{S}_i$ are subgraph-isomorphic, denoted as $\mathcal{S}_u\cong_{subgraph} \mathcal{S}_i$, if there exists a bijective mapping $h: \widetilde{\mathcal{N}}^1_{u} \rightarrow \widetilde{\mathcal{N}}^1_{i}$ such that $h(u) = i$ and $\forall v_1, v_2\in\widetilde{\mathcal{N}}_u^1, e_{v_1v_2} \in \mathcal{E}_{\mathcal{S}_u}~iff~e_{h(v_1)h(v_2)} \in\mathcal{E}_{\mathcal{S}_i}$ and $\mathbf{e}_{v_1}^l = \mathbf{e}_{h(v_1)}^l, \mathbf{e}_{v_2}^l = \mathbf{e}_{h(v_2)}^l$.
\end{definition}

Corresponding to the backward($\Longleftarrow$) proof of Theorem~\ref{thm-1WL}, here we show two of such graphs $\mathcal{S}_u, \mathcal{S}_u'$, which are subgraph isomorphic but non-bipartite-subgraph-isomorphic. Assuming $u$ and $u'$ have exactly the same neighborhood feature vectors $\mathbf{e}$, then directly propagating according to 1-WL or even considering node degree as the edge weight as GCN~\cite{gcn} can still end up with the same propagated feature for $u$ and $u'$. However, if we leverage JC to calculate CIR as introduced in Appendix~\ref{sec-toponote}, then we would end up with $\{(d_ud_{j_1})^{-0.5}\mathbf{e}, (d_ud_{j_2})^{-0.5}\mathbf{e}, (d_ud_{j_3})^{-0.5}\mathbf{e}\} \ne \{(d_{u'}^{-0.5}d_{j'_1}^{-0.5} + \boldsymbol{\widetilde{\Phi}}_{u'j'_1})\mathbf{e}, (d_{u'}^{-0.5}d_{j'_2}^{-0.5} + \boldsymbol{\widetilde{\Phi}}_{u'j'_2})\mathbf{e}, (d_{u'}^{-0.5}d_{j'_3}^{-0.5} + \boldsymbol{\widetilde{\Phi}}_{u'j'_3})\mathbf{e}\}$. Since $g$ is injective by Lemma~\ref{lemma-injective}, CAGCN would yield two different embeddings for $u$ and $u'$.

\begin{figure}[htbp!]
     \centering
     \includegraphics[width=0.5\textwidth]{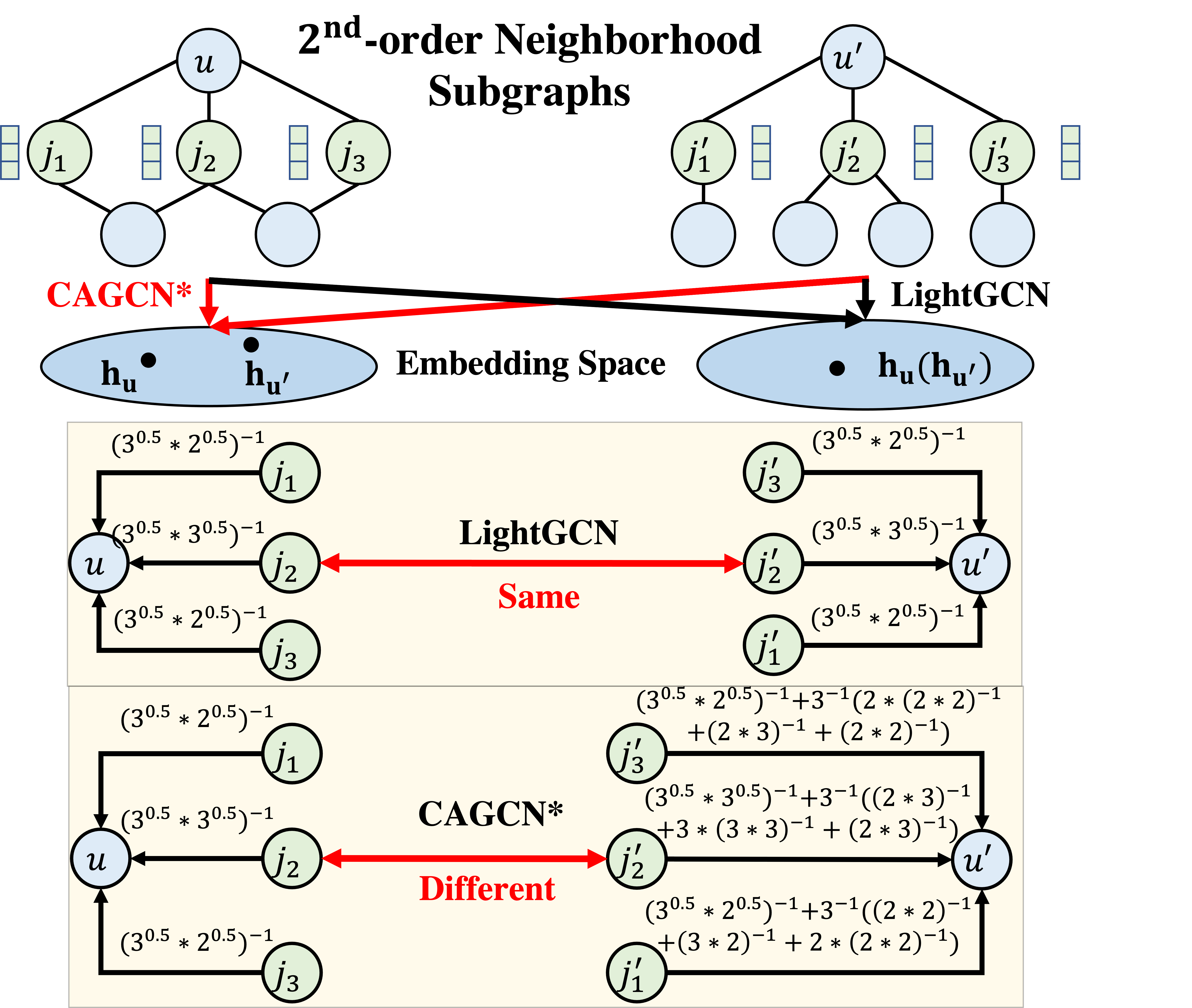}
     \caption{An example showing two neighborhood subgraph $\mathcal{S}_u, \mathcal{S}_{u'}$ that are subgraph-isomorphic but not bipartite-subgraph-isomorphic.}
     \label{fig-example}
     \vspace{-3ex}
\end{figure}

\subsection{Efficiency Comparison}\label{app-effcompare}
Here we use exactly the same setting introduced in Section~\ref{sec-effcompare} and keep track the performance/training time per 5 epochs for Gowalla, Yelp2018, Ml-1M, and Loseit in Figure~\ref{fig-effi_combine}. Clearly, CAGCN* achieves extremely higher performance in significantly less time because the collaboration-aware graph convolution leverages more beneficial collaborations from neighborhoods. Specifically, in Figure~\ref{fig-effi_combine}(c), we observe the slower performance increase of CAGCN* and LightGCN on Ml-1M. We ascribe this to the higher density of Ml-1M as in Table~\ref{tab-dataset} that leads to so much noisy neighboring information. One future direction could be to leverage the CIR to prune the graph of these noisy connections in an iterative fashion as either a preprocessing step or even used throughout training when paired with an attention mechanism (although the latter would come at a significantly longer training time).

\begin{figure*}[htbp!]
     \centering
     \includegraphics[width=0.9\textwidth]{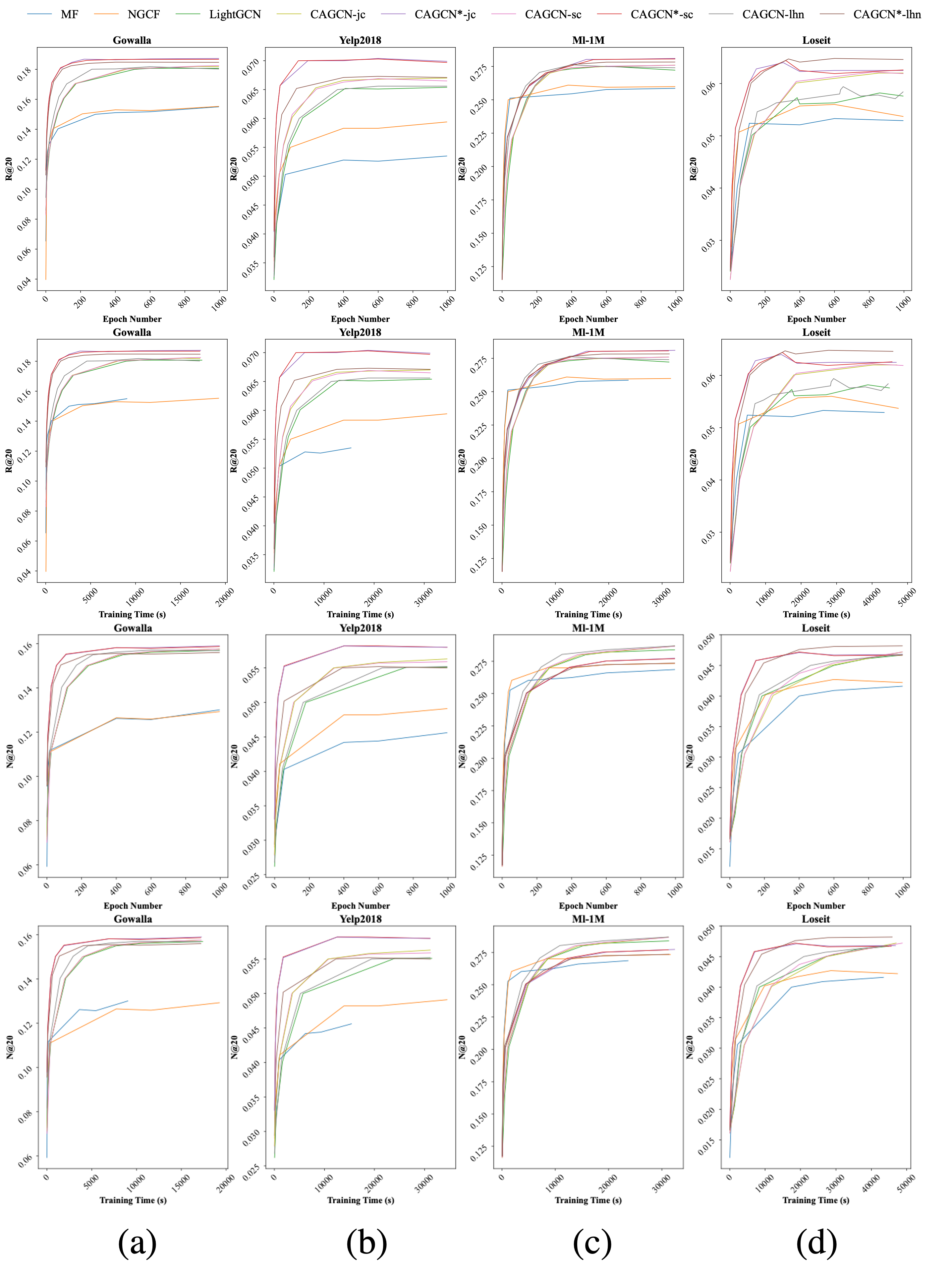}
     \caption{Comparing the training efficiency of each model under R@20 and N@20.}
     \label{fig-effi_combine}
\end{figure*}

\end{document}